
\documentclass[journal]{IEEEtran}

%
\usepackage{cite}


\usepackage[cmex10]{amsmath}
%
\interdisplaylinepenalty=2500

%
\usepackage{url}



\usepackage{amsopn,amsthm,amssymb,upref,verbatim}
\usepackage{pgfplots}
\pgfplotsset{compat=1.16}
\tikzset{vertex/.style={ draw , circle , fill , inner sep=0em , minimum size=0.3em}}
\tikzset{empty/.style={inner sep=0em, outer sep=0em, minimum size=0em}}

\makeatletter
\def\ps@IEEEtitlepagestyle{
  \def\@oddfoot{\mycopyrightnotice}
  \def\@evenfoot{}
}
\def\mycopyrightnotice{
  {\footnotesize
  \begin{minipage}{\textwidth}
  \centering
  Copyright~\copyright~2017 IEEE. Personal use of this material is permitted. However, permission to use this  \\ 
  material for any other purposes must be obtained from the IEEE by sending a request to pubs-permissions@ieee.org.
  \end{minipage}
  }
}

\usepackage{mathtools} 
\usepackage{bbm} 

\let\le\leqslant
\let\ge\geqslant

\newcommand{\mat}[1]{#1}

\renewcommand{\mod}{\mathop{\mathrm{mod}}}
\newcommand{\floor}[1]{{\left\lfloor{#1}\right\rfloor}} 
\newcommand{\ceil}[1]{{\left\lceil{#1}\right\rceil}} 
\newcommand{\abs}[1]{{\left\lvert#1\right\rvert}} 
\newcommand{\rbr}[1]{\left(#1\right)}
\newcommand{\ones}[1]{\mathrm{j}_{#1}}

\newcommand{\Mat}{\mathcal{M}} 
\newcommand{\F}{\mathbb{F}}


\newcommand{\HX}{H_{\mathrm{X}}}
\newcommand{\HZ}{H_{\mathrm{Z}}}

\newcommand{\CX}{\mathcal{C}_{\mathrm{X}}}
\newcommand{\CZ}{\mathcal{C}_{\mathrm{Z}}}

\newcommand{\dX}{d_{\mathrm{X}}}
\newcommand{\dZ}{d_{\mathrm{Z}}}

\newcommand{\T}{\mathrm{T}}

\newcommand{\id}{\mathrm{id}}

\newcommand{\HP}{\mathrm{HP}}
\newcommand{\LP}{\mathrm{LP}}

\newcommand{\C}{\mathbb{C}}

\newcommand{\Z}{\mathbb{Z}}
\newcommand{\bv}{\mathbbm{b}}
\newcommand{\B}{\mathbb{B}}

\newcommand{\zm}{\mathbf{0}}

\newcommand{\cA}{\mathcal{A}}
\newcommand{\cB}{\mathcal{B}}
\newcommand{\cC}{\mathcal{C}}

\newcommand{\cQ}{\mathcal{Q}}

\newcommand{\cT}{\mathcal{T}}

\DeclareMathOperator{\rk}{rk}
\DeclareMathOperator{\im}{im}

\DeclareMathOperator{\ch}{char}
\DeclareMathOperator{\spn}{span}
\DeclareMathOperator{\perm}{perm}

\DeclareMathOperator{\polylog}{\mathrm{polylog}}

\newcommand{\ket}[1]{\lvert{#1}\rangle}

\newcommand{\eps}{\varepsilon}
\renewcommand{\phi}{\varphi}

\theoremstyle{plain}
\newtheorem{lemma}{Lemma}
\newtheorem*{lemma*}{Lemma}
\newtheorem{proposition}{Proposition}
\newtheorem{theorem}{Theorem}
\newtheorem{corollary}{Corollary}

\theoremstyle{definition}

\theoremstyle{remark}
\newtheorem{remark}{Remark}
\newtheorem{example}{Example}
\newtheorem*{remark*}{Remark}

\makeatletter
\renewcommand*\env@matrix[1][*\c@MaxMatrixCols c]{%
  \hskip -\arraycolsep
  \let\@ifnextchar\new@ifnextchar
  \array{#1}}
\makeatother

\begin{document}


%
\title{Quantum LDPC Codes with Almost Linear Minimum Distance}

%

\author{Pavel~Panteleev and Gleb~Kalachev\thanks{Pavel~Panteleev and Gleb~Kalachev are with the Faculty of Mechanics and Mathematics, Moscow State University, Moscow, Russia.%
}}

%

\maketitle

\begin{abstract}
We give a~construction of quantum  LDPC codes  of dimension $\Theta(\log N)$ and distance $\Theta(N/\log N)$ as the~code length~$N\to\infty$. Using a product of chain complexes this construction also provides a family of quantum LDPC codes of distance~$\Omega(N^{1-\alpha/2}/\log N)$ and dimension~$\Omega(N^\alpha \log N)$, where $0 \le \alpha < 1$. We also introduce and study a new operation called lifted product, which naturally generalizes the product operations for quantum codes and chain complexes. Moreover, as a~simple byproduct of our results on quantum codes, we obtain a~new result on classical codes. We show that for any fixed $R < 1$ there exists an~asymptotically good family of classical quasi-cyclic LDPC codes of rate at least $R$ with, in some sense, optimal circulant size~$\Omega(N/\log N)$ as the~code length~$N\to\infty$.
\end{abstract}
\begin{IEEEkeywords}
CSS code, quantum LDPC, quasi-cyclic (QC) LDPC, hypergraph product code, chain complex.
\end{IEEEkeywords}

\section{Introduction}

\IEEEPARstart{C}{lassical low-density} parity-check (LDPC) codes \cite{gallager1963} are a~very important class of linear codes widely used in theory and practise. The~definitive property of a family of LDPC codes is that there exists some constant $w$ such that for any code from this family both the~row and the~column weights of its parity-check matrix are bounded above by~$w$. The~theoretical importance of LDPC codes stems mostly from the fact that they contain asymptotically good codes of any positive rate with a~linear time decoding that can attain the~Shannon capacity~\cite{Sipser&Spielman:1996, Barg&Zemor:2002}. Their quantum analogs, called quantum LDPC (QLDPC) codes (see~\cite{Babar:2015} for a~good review), may play a~very important role in design of future fault-tolerant quantum computers~\cite{Gottesman:2014, Fawzi:2018}. However, it~is still unknown whether there exists an asymptotically good family of QLDPC codes with a~positive rate. More dramatically, to~the~best of our knowledge, there are even no such examples of constant dimension and linear distance, while in the classical case we have the~repetition code as a~trivial example. 

Up until very recently, the~minimum distance of all known examples of QLDPC codes~\cite{Kitaev:2002, Freedman:2002:best-code, Tillich&Zemor:2009, Evra:2020, Kaufman:2021} was bounded above by $O(N^{1/2} \log^\alpha N$) for some $\alpha\ge 0$ as the~code length~${N\to\infty}$. In~\cite{Hastings:2021:fiber}
it was shown that there exists a~family of QLDPC codes of distance and dimension bounded below by $\Omega(N^{3/5}/\polylog N)$. The QLDPC codes from the~all above-mentioned papers belong to a~wide class of quantum codes called CSS codes~\cite{CSS:1996, CSS2:1996}. A~CSS code $\cQ$ of~dimension~$K$ is defined by a pair of classical linear codes $\CZ,\CX\subseteq \F_2^N$ such that $\CX^\perp\subseteq \CZ$, and $K = \dim \CZ/\CX^\perp$. Its~minimum distance $d$ is defined as $\min(\dZ,\dX)$, where $\dZ$ and $\dX$ are the minimal Hamming weights of the~vectors from $\CZ\setminus\CX^\perp$ and $\CX\setminus\CZ^\perp$, respectively. In this case we often say that $\cQ$ is an~$[[N,K,d]]$~code, or, if we want to be more precise, an~$[[N,K, d_\mathrm{Z},d_\mathrm{X}]]$~code. The code $\CZ$ is usually represented by a parity-check matrix~$\HX$, and the code~$\CX$ by a parity-check matrix $\HZ$, and $\CX^\perp\subseteq \CZ$ implies that $\HX \HZ^\T = \zm$.

The approach used in~\cite{Evra:2020, Kaufman:2021, Hastings:2021:fiber} is, first, to construct a~quantum code~$\cQ$ with $\dZ \gg \dX$, $\dZ \dX > \Omega(N)$, and then to apply the~homological product~\cite{Hastings:2017, Pryadko:2019, Kaufman:2021} of~the quantum code~$\cQ$ with a~classical code $\cC$ of minimal distance~$d\approx \dZ/\dX$ in order to obtain a~new quantum code $\cQ'$ of distance $\min(\dZ,d\cdot\dX)$. In~\cite{Hastings:2021:fiber} this ``distance balancing'' procedure was applied to a~family of codes (called fiber \mbox{bundle} codes) with parameters ${\dZ=\Omega(N^{3/4}/\polylog N)}$, ${\dX = \Omega(N^{1/2})}$, and ${K=\Theta(N^{1/2})}$. We~should note that this particular family of fiber bundle codes coincides\footnote{The definition of these codes in~\cite{Hastings:2021:fiber} is given in terms of chain complexes, while in~\cite{Panteleev&Kalachev:2019} these codes are defined by parity-check matrices $\HX$ and $\HZ$.} with an earlier proposed~\cite{Panteleev&Kalachev:2019} family of quasi-cyclic GHP codes\footnote{In the~current paper we further generalize these codes and call them \emph{lifted product codes}.}, defined by some quasi-cyclic matrix $A$ of circulant size $\ell$ and the~polynomial ${b=1+x}$, which is a parity polynomial of the cyclic repetition code of length~$\ell$. The~parity-check matrices $\HX$, $\HZ$ for such codes are binary block matrices that look as follows:
\begin{equation}\label{eq:qc-ghp}
\begin{split}
    \HX &=
    \left[
    \begin{array}{@{}c|c@{}}
    \begin{matrix}
        A_{11} & \dots & A_{1n} \\
        \vdots & \ddots & \vdots \\
        A_{m1} & \dots & A_{mn}
    \end{matrix}
    &
    \begin{matrix}
        B & \dots & \zm \\
        \vdots & \ddots & \vdots \\
        \zm & \dots & B
    \end{matrix}
    \end{array}
\right]\!;\\
    \HZ &=
    \left[
    \begin{array}{@{}c|c@{}}
    \begin{matrix}
        B^\T & \dots & \zm \\
        \vdots & \ddots & \vdots \\
        \zm & \dots & B^\T
    \end{matrix}
    &
    \begin{matrix}
        A^\T_{11} & \dots & A^\T_{m1} \\
        \vdots & \ddots & \vdots \\
        A^\T_{1n} & \dots & A^\T_{mn}
    \end{matrix}
    \end{array}
\right]\!;
\end{split}
\end{equation}
where each $A_{ij}$ is an~$\ell\times\ell$-circulant matrix (see Appendix~\ref{sc:circulants}), and $B$ is the~$\ell\times\ell$ circulant matrix that is the parity-check matrix for the cyclic code with the~parity polynomial $b$. Since we want to obtain low-density matrices, the circulants in the~above block matrices should be as sparse as possible. This is the reason why in all the~examples of such codes in~\cite{Panteleev&Kalachev:2019} the~matrices $A_{i j}$ are circulants of weight~$1$, i.e., permutation matrices of some cyclic shifts modulo~$\ell$.

In the~terminology of~\cite{Hastings:2021:fiber}, the polynomial~$b$ corresponds to the~fiber, and the~matrix~$A$ to the~parity-check matrix of the~base with twists. In~\cite{Panteleev&Kalachev:2019} this class of codes was studied in the case of arbitrary parity polynomial $b$, and in the case of odd~$\ell$ a~formula for the dimension of such codes was given. Moreover, several examples of these codes were constructed, and one of them was shown to outperform under the~BP-OSD decoder (also proposed in~\cite{Panteleev&Kalachev:2019}) a~relatively large surface
code decoded by a near-optimal decoder from~\cite{Bravyi:2014}.

In this paper we show that if we carefully choose a~low-density~quasi-cyclic matrix $A$ and use $b=1+x$, then the corresponding GHP code has distance $\Theta(N/\log N)$ and dimension $\Theta(\log N)$ as the code length $N\to\infty$. This gives us our first main result.
\begin{theorem}\label{th:main1}
There exists a family of QLDPC codes of \mbox{dimension} $\Theta(\log N)$ and distance~$\Theta(N/\log N)$ as the~code length $N\to\infty$.
\end{theorem}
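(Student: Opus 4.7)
The plan is to instantiate the construction of~\eqref{eq:qc-ghp} with fiber polynomial $b=1+x$ and with $A=(x^{a_{ij}})$ an $m\times n$ block matrix of weight-one circulants, keeping $m$ and $n$ bounded constants and letting the circulant size $\ell$ tend to infinity. With $N=(m+n)\ell$ this gives $\log N=\Theta(\log\ell)$, so the claim becomes: dimension $\Theta(\log\ell)$ and distance $\Theta(\ell/\log\ell)$. The LDPC property is then immediate, since every row and column of $\HX$ and $\HZ$ has weight at most $\max(m,n)+2$.

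For the dimension I would invoke the explicit formula from~\cite{Panteleev&Kalachev:2019} (valid for odd $\ell$), which, after applying the Chinese Remainder Theorem to $\F_2[x]/(x^\ell-1)$, expresses $K$ as a sum, over the irreducible factors $g$ of $x^\ell-1$, of contributions controlled by the rank of $A\bmod g$ over $\F_2[x]/(g)$. Choosing $\ell$ so that $x^\ell-1$ admits a factorization with $\Theta(\log\ell)$ factors on which the rank of $A$ can be forced to drop in a controlled way, and full rank on the remaining ones, yields $K=\Theta(\log\ell)=\Theta(\log N)$.

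The main effort is the distance lower bound $d\ge \Omega(\ell/\log\ell)$. I would argue probabilistically: sample the shifts $a_{ij}\in\Z_\ell$ uniformly and independently and, for each fixed candidate $v\in\F_2^N$ of Hamming weight $w$, bound the probability that $v$ represents a non-trivial coset of $\CZ/\CX^\perp$ or $\CX/\CZ^\perp$. A union bound over $\binom{N}{w}$ supports and over $w\le c_0\,\ell/\log\ell$ should then give expected count below one. The key combinatorial ingredient is the two-part structure of $v$: the ``fiber'' coordinates are constrained by $B$ (since $b=1+x$ forces the relevant fiber projection to have even weight along each cyclic orbit), while the ``base'' coordinates must simultaneously satisfy the $mn$ parity checks whose shifts are random, contributing roughly an $\ell^{-1}$ probability per check that a prescribed support is admissible.

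The main obstacle will be pushing the union bound all the way up to $w\approx\ell/\log\ell$: the translational symmetry of the quasi-cyclic construction correlates supports (they come in $\ell$-orbits), and the $mn$ random parity checks are not fully independent as functions of the $a_{ij}$, so a naive Gilbert--Varshamov style estimate is too weak and must be replaced by an argument exploiting the $\F_2[\Z_\ell]$-module structure of the code. A matching upper bound $d\le O(\ell/\log\ell)$ can then be produced by exhibiting explicit low-weight logical operators combining the fiber-invariant vector with short base syndromes, closing the $\Theta$-asymptotic.
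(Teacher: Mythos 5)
Your proposal has a fundamental problem at the dimension step that makes the overall plan unrecoverable, and the distance strategy is also quite far from what actually works.

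\textbf{Dimension.} You keep $m$ and $n$ as bounded constants and hope that a careful choice of $\ell$ (so that $x^\ell-1$ has $\Theta(\log\ell)$ irreducible factors on which the rank of $A$ drops) gives $K=\Theta(\log\ell)$. This cannot work with $b=1+x$. The dimension of $\LP(A,b)$ for an irreducible $b$ depends \emph{only} on the single factor $b$ of $x^\ell-1$, not on all factors: by Lemma~\ref{lp-simple-dim} (equivalently, formula~(\ref{eq:lp-prod-dim}) in Appendix~\ref{sc:LP-decomp}, specialized to the factorization of $x^\ell-1$), every factor $f_i\neq b$ contributes nothing because $b(\beta_i)\neq 0$ forces $\dim\HP(A(\beta_i),b(\beta_i))=0$. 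With $b=1+x$ and $\beta=1$ the formula collapses to $K=\dim\cC(A(1))+\dim\cC(A^\T(1))\le m+n$. If $m,n$ are constants, $K=O(1)$, regardless of how $x^\ell-1$ factors. The paper's fix is structurally different: $n=\Theta(\log\ell)$ \emph{grows} with $\ell$. Concretely, $A\in\Mat_{m\times wn}(R_\ell)$ with $n=\Theta(\log\ell)$, so $A$ is the QC parity-check matrix of a \emph{growing} classical code (a Tanner/expander code $\cT(\hat G,\cC_0)$ on a shift $\ell$-lift $\hat G$ of a small expander), and then $N=\ell(wn+m)$ gives $\log N\sim\log\ell$, $n=\Theta(\log N)$, $K=\Theta(n)$.

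\textbf{Distance.} Your plan is a first-moment/union-bound argument over random shifts $a_{ij}$, which you yourself flag as likely too weak at $w\approx\ell/\log\ell$ because of the $\ell$-orbit symmetry and the correlations among checks. The paper avoids this entirely and gives a \emph{deterministic} argument once $A$ is chosen: Proposition~\ref{pr:alpha-beta} (built on the expander mixing lemma, Lemma~\ref{lm:EML}, the Tanner-code estimate Lemma~\ref{lm:alpha-beta-exp}, and Agarwal et al.'s result that random shift $\ell$-lifts of an expander remain expanders for $\ell$ up to $\exp(\Omega(n))$) produces a $w$-limited $A$ with both $A$ and $A^\T$ $(\alpha,\beta)$-expanding. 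Proposition~\ref{pr:main} then converts this expansion into $d(\cQ)\ge\gamma\ell$ via the codeword-structure Lemma~\ref{lm:LP-simple-cws} and the autocorrelation Lemma~\ref{lm:autcorr}. Nothing here is a union bound over supports; the whole probabilistic content is pushed into the existence of a good expander lift and a good small code $\cC_0$, both at scale $n=\Theta(\log\ell)$, which is exactly the regime where union bounds are cheap. Your matching upper bound $d\le O(\ell)$ via a fiber-invariant logical $[\zm,\ones{\ell}e_i]$ is the one correct and shared ingredient.

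In short: the missing idea is that $A$ must itself be a lifted expander code with $n=\Theta(\log\ell)$ columns over $R_\ell$ (not a fixed-size block matrix of monomials), which simultaneously makes the dimension $\Theta(\log N)$ and makes a deterministic expansion-based distance bound available.
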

The main technical tool in the proof of the~above theorem is expander codes~\cite{Tanner1981, Sipser&Spielman:1996,Zemor:2001}. Such codes are defined by a~graph~$G$ and a~small linear code $\cC_0$. In order to obtain a~good expander code, the~second largest eigenvalue of the~adjacency matrix of~$G$ should be sufficiently small. A~graph\footnote{Formally, we should rather talk about an infinite family of graphs.} that satisfies this condition is called an~\emph{expander graph} (see~\cite{Hoory:2006} for a~good survey). An~important result, we rely on in our proof, is Theorem~1.2 from~\cite{Agarwal:2019}, which gives us a~way to construct quasi-cyclic matrices $A$ with the desired properties of very large circulant size $\ell = \Omega(N/\log N)$. Using this result, we first construct a~large expander graph~$G$ with a~quasi-cyclic adjacency matrix of circulant size $\ell$ from a~small expander graph; then we apply to $G$ the~expander code construction to obtain a~code $\cT(G,\cC_0)$ and define $A$ as its parity-check matrix. Since the~adjacency matrix of $G$ is quasi-cyclic, it is possible to define $\cT(G,\cC_0)$ in such a~way that its parity-check matrix $A$ is also quasi-cyclic of circulant size $\ell$.       

As a~byproduct of the proof of Theorem~\ref{th:main1} we also obtain (Corollary~\ref{cor:QC-dist}) that there exists a~family of classical quasi-cyclic LDPC codes with distance $\Theta(N)$ and circulant size $\Omega(N/\log N)$. Using the~well-known upper bound~\cite{Smarandache:2012} on the~minimal distance of quasi-cyclic LDPC codes we show that, in some sense, this circulant size is optimal.

Though the distance of the obtained quantum codes is almost linear as $N\to\infty$, their dimension is only $\Theta(\log N)$. In fact, the dimension~can be easily increased by a~moderate reduction of the code distance. The~idea is somewhat similar to the~mentioned above ``distance balancing'' procedure, but instead of the~code distance, we increase the~code dimension. As it was shown\footnote{Note that a~similar bound on the~distance was obtained earlier in~\cite[Theorem~1]{Pryadko:2019} in the language of chain complexes.} in~\cite[Theorem~2.3]{Evra:2020}, if we have a~quantum $[[N,K,d_Z, d_X]]$~code $\cQ$ and a~classical $[n,k,d]$~code $\cC$, then we can obtain the~quantum $[[N',k K, d'_\mathrm{Z}, d'_\mathrm{X}]]$~code $\cQ\otimes\cC$ called the~\emph{homological product} of $\cQ$ and $\cC$ such that:
\[ N' \le 2nN,\quad d'_\mathrm{Z} \ge d\cdot d_\mathrm{Z},\quad d'_\mathrm{X} \ge \dX. \]
Now if we consider the~quantum code $(\cQ\otimes \cC)^*$, where we change the roles of codes $\CZ$ and $\CX$ in $\cQ\otimes\cC$, and again apply the~homological product with~$\cC$, then we get the~$[[N'',k^2 K, d''_\mathrm{Z},d''_\mathrm{X}]]$~code\footnote{It is not hard to see that this construction is equivalent to the homological product of a quantum code and a~hypergraph product code defined by $\cC$.} $(\cQ\otimes \cC)^*\otimes \cC$ such that:
\begin{equation}\label{eq:hom-prod}
N'' \le 4n^2N,\quad d''_\mathrm{Z} \ge d\cdot d_\mathrm{Z},\quad d''_\mathrm{X} \ge d\cdot\dX.
\end{equation}
Therefore in order to obtain codes of large dimension out of the~constructed in this work codes of dimension $\Theta(\log N)$ and distance $\Theta(N/\log N)$ it remains to let $\cC$ be from a~family\footnote{As we already mentioned before, asymptotically good classical LDPC codes of non-vanishing rate do exist~\cite{gallager1963}.} of classical LDPC $[n,k,d]$ codes such that ${k = \Theta(n)}$, ${d = \Theta(n)}$, and $n = \Theta(N^{\frac{\alpha}{2(1-\alpha)}})$ as $N\to\infty$, where $\alpha > 0$. Indeed, we can easily check using (\ref{eq:hom-prod}) that as the end result we obtain the~quantum $[[N'',K'',d'']]$~code such that:
\begin{align*}
    N'' &= O(n^2N) =  O(N^{\frac{1}{1-\alpha}}), \text{ and hence } N = \Omega\!\rbr{(N'')^{1-\alpha}}\!;\\
    K'' &= k^2 K = \Theta(N^{\frac{\alpha}{1-\alpha}}\log N) = \Omega((N'')^\alpha\log N'');\\
        d'' &= \Omega(d\cdot N/\log N) = \Omega(n\cdot N/\log N) \\
        &= \Omega\!\rbr{(N'')^{\alpha/2}\cdot N/\log N} = \Omega\!\rbr{(N'')^{1 -\alpha/2}\log N''}\!.
\end{align*}
We should emphasize that  all the codes involved in the above construction have low-density parity-check matrices. Hence the obtained quantum codes are QLDPC codes, and we get the following result.
\begin{theorem}\label{th:main2}
For every $\alpha$ such that $0 \le \alpha < 1$ there exists a family of QLDPC codes of dimension~$\Omega(N^\alpha \log N)$ and distance~$\Omega(N^{1-\alpha/2}/\log N)$ as the~code length $N\to\infty$.
\end{theorem}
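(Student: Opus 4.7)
The plan is to follow the construction sketched in the paragraph above Theorem~\ref{th:main2}; essentially all that remains is to verify the claimed asymptotic estimates and confirm that the resulting code is still LDPC. First, take the $[[N_0,K_0,d_0]]$ QLDPC code $\cQ$ from Theorem~\ref{th:main1}, where $K_0 = \Theta(\log N_0)$ and $d_0 = \Theta(N_0/\log N_0)$. Next, take a classical LDPC $[n,k,d]$ code $\cC$ with $k = \Theta(n)$ and $d = \Theta(n)$, which exists by Gallager's construction. Then form the two-stage homological product $(\cQ\otimes\cC)^*\otimes\cC$, where the star denotes swapping the roles of $\CX$ and $\CZ$. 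The bounds in Equation~(\ref{eq:hom-prod}) yield a quantum code of length $N'' \le 4n^2 N_0$, dimension $k^2 K_0$, and with both X- and Z-distances at least $d\cdot\min(d_0,\dX) = \Theta(n\cdot N_0/\log N_0)$.

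The second step is to choose the size of $\cC$ so that the asymptotics match the statement. Setting $n = \Theta\!\rbr{N_0^{\alpha/(2(1-\alpha))}}$ makes the length $N'' = \Theta\!\rbr{N_0^{1/(1-\alpha)}}$, so $n = \Theta((N'')^{\alpha/2})$ and $N_0 = \Theta((N'')^{1-\alpha})$. Substituting back into the dimension and distance bounds gives $K'' = \Omega((N'')^\alpha\log N'')$ and $d'' = \Omega((N'')^{1-\alpha/2}/\log N'')$, as required. The edge case $\alpha = 0$ is just Theorem~\ref{th:main1} itself, where no product is applied.

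The only step that requires justification beyond the above arithmetic is preservation of the LDPC property under the homological product. This follows from the chain-complex definition of the product: the parity-check matrices of $\cQ\otimes\cC$ are block matrices whose row and column weights are at most the sum of the corresponding weights of the two factors. Since both $\cQ$ (from Theorem~\ref{th:main1}) and $\cC$ (Gallager-type) have parity-check matrices of constant weight, so does $\cQ\otimes\cC$, and likewise after the second product with $\cC$. I do not foresee a serious obstacle here; the nontrivial content of the result lies entirely in Theorem~\ref{th:main1}, and Theorem~\ref{th:main2} is a direct dimension-boosting application of the homological product.
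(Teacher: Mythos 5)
Your proposal is correct and follows essentially the same route as the paper: Theorem~\ref{th:main2} is proved in the paper by precisely this two-stage homological product with a classical LDPC code of size $n = \Theta\bigl(N_0^{\alpha/(2(1-\alpha))}\bigr)$, followed by the same parameter calculation. You also spell out the preservation of the LDPC property a bit more explicitly than the paper does (which merely asserts it), but the argument is the same.
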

\begin{remark}
Let us note that the case $\alpha = 0$ of the above theorem corresponds to the codes of distance $\Theta(N/\log N)$ and dimension $\Theta(\log N)$ from Theorem~\ref{th:main1}.
\end{remark}

\subsection*{Lifted Product Codes}

In this paper, we continue our~study of the codes from~\cite{Panteleev&Kalachev:2019} in a more general form, and call them \emph{lifted product (LP) codes}. Roughly speaking, LP~codes are the lifted versions of hypergraph product codes proposed in~\cite{Tillich&Zemor:2009,Tillich&Zemor:2014}. 
As we will see later in Section~\ref{sc:lp}, LP codes generalize many well-known examples of  QLDPC codes~\cite{Hagiwara:2007, Tillich&Zemor:2009, Haah:2011, Kovalev&Pryadko:HBP:2013}, of which they are mostly motivated. We should also note that quasi-cyclic LP codes can be shown to be equivalent to a~special case of hyperbicycle codes~\cite{Kovalev&Pryadko:HBP:2013}, when the~parameter $\chi=1$ (see, more in Subsection~\ref{sc:qc-lp-codes}).

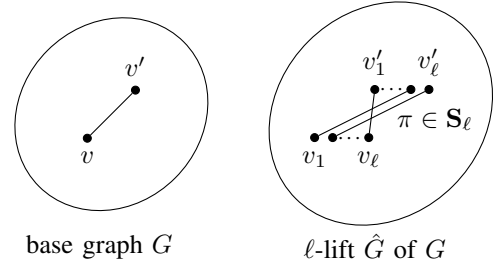
\begin{figure}
  \centering
  \begin{tikzpicture}[scale=0.8]
    \node[vertex] (v) at (-0.4,-0.4) [label=below:$v$]{};
    \node[vertex] (v') at (0.4,0.4) [label=$v'$]{} ;
    \draw (v) -- (v');
    \draw (0,0) ellipse[x radius=1.5 cm,y radius=1.7 cm, rotate=-45];    
    \node[empty]  (cap) at (-0.2,-2.2) {base graph $G$};
  \end{tikzpicture}
  \quad
  \begin{tikzpicture}[scale=0.8]
    \node[vertex] (v1)    at (-1.2,-0.4) [label=below:$v_1$]{};
    \node[vertex] (v2)    at (-0.9,-0.4) {};
    \node[empty]  (vdots) at (-0.6,-0.4) {\footnotesize\ldots};
    \node[vertex] (vell)    at (-0.3,-0.4) [label=below:$v_\ell$]{};

    \node[vertex] (v1')    at (-1.2+1,0.4) [label=above:$v'_1$]{};
    \node[empty]  (vdots') at (-0.9+1,0.4) {\footnotesize\ldots};
    \node[vertex] (v2')    at (-0.6+1,0.4) {};
    \node[vertex] (vell')    at (-0.3+1,0.4) [label=above:$v'_\ell$]{};
    \draw (v1) -- (v2');
    \draw (v2) -- (vell');
    \draw (vell) -- (v1');
    \draw (-0.1,0) ellipse[x radius=1.7 cm,y radius=2 cm, rotate=-45];
    \node[empty]  (pi) at (0.8,-0.1) {$\pi\in\mathbf{S}_\ell$};
    \node[empty]  (cap) at (-0.2,-2.2) {$\ell$-lift $\hat{G}$ of $G$};
  \end{tikzpicture}
  \caption{Lifting of the base graph $G$.}
  \label{fg:lifting}
\end{figure}

Large classical LDPC codes are often constructed as lifts of a~small graph called the~\emph{base graph} or the~\emph{protograph}~\cite{Thorpe:2003}. In graph theory, the~Tanner graphs~\cite{Tanner1981} of such $\ell$ times larger codes are called  \emph{$\ell$-lifts} or \emph{$\ell$-fold cover graphs} for the base graph. 
Let us remind that an~$\ell$-lift $\hat{G}$ of a~base graph\footnote{Multiple edges and loops are usually allowed in the base graph~$G$.} $G$ is obtained if we replace in the base graph each vertex $v\in V(G)$ with $\ell$~replicas $v_{1},\dots,v_{\ell}$; and replace each edge $e\in E(G)$ that connects vertices $v, v'\in V(G)$ with $\ell$~replicas~$e_1,\dots,e_\ell$ such that $e_i$ connects in $\hat{G}$ the vertices $v_i$ and $v'_{\pi(i)}$, where ${\pi\in\mathbf{S}_\ell}$ is some permutation on the set $\{1,\dots,\ell\}$ (see Fig.~\ref{fg:lifting}). 
Note that the~permutations for different edges may be different.
If~the~set of permutations $\pi$ is restricted to some finite permutation subgroup $\Gamma\subseteq \mathbf{S}_\ell$ such that\footnote{Such groups are obtained from some abstract finite group as the~group of all its left actions on itself.} $\abs{\Gamma}=\ell$, then we also say that $\hat{G}$ is a~\emph{$\Gamma$-lift} of $G$, and a~\emph{shift $\ell$-lift} when $\Gamma = \left<(1,2,\dots,\ell)\right>\subseteq S_\ell$ is the cyclic group of size $\ell$ generated by the permutation $(1,2,\dots,\ell)\in\mathbf{S}_\ell$.

Note that the parity-check matrix of an~LDPC code that was obtained as a~shift $\ell$-lift of some base graph is a quasi-cyclic matrix of circulant size~$\ell$. Let us briefly remind that quasi-cyclic (QC) matrices are block matrices, where each block is an~$\ell\times\ell$-circulant. They are usually represented by matrices over the quotient polynomial ring~$R_\ell=\F_2[x]/(x^\ell - 1)$. In~a~more general case of $\Gamma$-lifts the corresponding binary block matrices can be represented by matrices over a~group algebra~$\F_2 G$, where $G$~is an abstract group of order $\ell$ that is isomorphic to the permutation subgroup~$\Gamma\subseteq\mathbf{S}_\ell$.   

The idea of the~lifted product is to start from two small Tanner graphs $\cA$ and $\cB$ that have some shift $\ell$-lifts $\hat{\cA}$ and~$\hat{\cB}$, respectively. Let $A$ be the corresponding matrix over $R_\ell$ for~$\hat{\cA}$, and~$B$ be the corresponding matrix for $\hat{\cB}$. Since the~ring~$R_\ell$ is commutative, we will show in Section~\ref{sc:lp} that one can use a~slightly modified hypergraph product construction~\cite{Tillich&Zemor:2009} in order to obtain the~parity-check matrices $\HX$ and $\HZ$ over~$R_\ell$. Finally, we will see that $\HX$ and $\HZ$ (considered as binary block matrices) define a~CSS code denoted by $\LP(A,B)$. In~fact, the~idea of the lifted product is more general and can be used not only with the ring~$R_\ell$. Later we will show that this construction works for matrices $A$ and $B$ over any ring $R$ that is a~commutative $\ell$-dimensional $\F_2$-algebra. For example, if $G$ is an~abelian group of order~$\ell$, then the~group algebra $\F_2 G$ can be used as the ring~$R$. Hence, we can use not only shift $\ell$-lifts, but also $\Gamma$-lifts when the permutation group~$\Gamma$ is abelian.
In fact, the~general definition of lifted product codes, given in~Section~\ref{sc:lp}, can also be used with non-abelian groups as well. However, in this paper, we consider only abelian groups, while  the~non-abelian case is left for future work.

We should emphasize that the codes from~Theorem~\ref{th:main1} correspond to the case when $B$ is a~$1\times 1$ matrix with only one element $b\in R_\ell$. We denote the~code $\LP(A,B)$ by $\LP(A,b)$ in this case. Later we will show how to find or estimate the~dimension of $\LP(A,B)$ and $\LP(A,b)$ in many special cases.

In~Fig.~\ref{fg:main} you can see the parameters of the~LP~codes from Theorems~\ref{th:main1} and \ref{th:main2} (shown in red) against the~parameters of the~fiber bundle (FB) codes~\cite{Hastings:2021:fiber} (shown in green) and the~hypergraph product (HP) codes~\cite{Tillich&Zemor:2009} (shown in blue). In~fact, if we apply the~method used in Theorem~\ref{th:main2} to the fiber bundle codes, then we can also increase their dimension in the~same way as for LP codes. The~parameters of the quantum codes obtained in this way are also shown in green. We can see from Fig.~\ref{fg:main} that (up to polylog factors) the~parameters of the~all mentioned above codes converge to the~parameters of the~hypergraph product codes as the dimension $K$ grows asymptotically up to the code length~$N$.   

\begin{figure}
\centering
\begin{tikzpicture}[scale=4.2,>=stealth]
    \filldraw[red!10] (0,0.9) -- (0.1,0.9) -- (1,0.5) -- (0,0.5);
    \filldraw[green!10] (0,0.6) -- (0.6,0.6) -- (1,0.5) -- (0,0.5);
    \filldraw[blue!10] (0,0) -- (1,0) -- (1,0.5) -- (0,0.5);
    \draw[->, thick,red] (0.1,0.9)  -- (0.98,0.51);
    \draw[->, thick,green!60!black] (0.6,0.6)  -- (0.97,0.5);
    \draw[dashed,red] (0.,0.9) node[left,red]{$N/\log N$} -| (0.1,0) node[below,red] {$\log N$};
    \draw[dashed,blue] (0,0.5) node[left,blue]{$N^{1/2}$} -- (1,0.5) -- (1,0);
    \draw[dashed,green!60!black] (0,0.6) node[left,green!60!black]{$N^{3/5}/\polylog N$} -- (0.6,0.6) -- (0.6,0)node[below,green!60!black]{$N^{3/5}/\polylog N$};
    \draw (1,0) node[below,blue]{$N$} -- (1,0.02);
    \draw (1,0)|-(0,1);
    \filldraw[blue] (1,0.5) circle (0.02) node[right] {HP};
    \filldraw[red] (0.1,0.9) circle (0.02) node[above right] {LP};
    \filldraw[green!60!black] (0.6,0.6) circle (0.02) node[fill=white,above left=0.1,opacity=0,text opacity=1] {FB};
    \draw (0.8,0.8) node {\Huge ?};
     \draw[->] (0,0) -- (1.1,0) node[right] {$K$}; 
     \draw[->] (0,0) -- (0,1.1) node[above] {$d$};
\end{tikzpicture}
\caption{HP -- hypergraph product codes, FB -- fiber-bundle codes, LP -- lifted product code $\LP(A,1+x)$. The~parameters of all codes (the~minimum distance~$d$ and the~dimension~$K$) are shown in the~logarithmic scale up to polylogarithmic factors as the code length $N\to\infty$.}
\label{fg:main}
\end{figure}
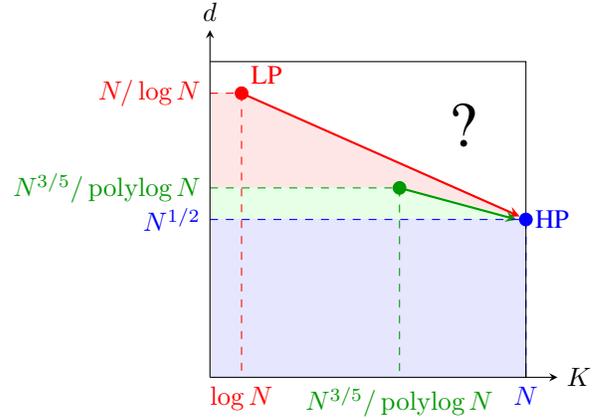

\subsection*{Lifted products of chain complexes}

Let us briefly show how to extend the~idea of the lifted product to chain complexes. It is known that $2$-dimensional chain complexes correspond to CSS codes~\cite{Bravyi:HMP:2014}. Nevertheless, \mbox{$s$-dimensional} chain complexes for $s>2$ can also be useful in the context of single-shot error correction~\cite{Earl:2019}. 

Consider some commutative ring~$R$. Let us remind that a~\emph{free $R$-module of rank $r$} is an~$R$-module~$M$, where there exists a~set of elements $\{m_1,\dots,m_r\}\subseteq M$ called \emph{basis} such that every $m\in M$ is uniquely represented as:
\[ m = a_1 m_1 + \dots + a_r m_r,\]
where $a_1,\dots,a_r\in R$.
Hence $M\cong R^r$, and if the ring $R$ is a~field, then $M$ is simply an~$r$-dimensional vector space over~$R$. A~canonical example of a~free $R$-module of rank $r$ is the~module of formal $R$-linear combinations of the~elements of some set~$S$, where $\abs{S}=r$.

By a~\emph{chain complex over a~commutative ring~$R$} we mean a~free $R$-module $\cC = \bigoplus_{i\in \Z} \cC_i$ with an~$R$-linear map $\partial\colon\cC \to\cC$ called a~\emph{boundary map} such that $\partial^2 = 0$, and $\partial(\cC_i) \subseteq \cC_{i-1}$. We suppose that each free $R$-module $\cC_i$ has finite rank, and $\cC_i = 0$ when $i < 0$ or $i > n$, where the parameter $n$ is called the~\emph{dimension} of $\cC$. We also assume that each $\cC_i$ comes with some preferred basis $\tilde{\cC}_i\subseteq \cC_i$, and we call its elements \emph{$i$-cells}. An~$n$-dimensional chain complex $\cC$ is usually written as
\[\cC_n \xrightarrow{\partial_{n}} \cC_{n-1} \xrightarrow{\partial_{n-1}} \cdots \xrightarrow{\partial_{2}} \cC_1 \xrightarrow{\partial_{1}} \cC_0,  \]
where $\partial_i = \partial|_{\cC_i}$, $i = 1,\dots,n$.

The lifted product of two chain complexes is obtained in a~similar way as for codes. We just consider the~standard tensor product $\cA \otimes \cB$ of chain complexes $\cA$, $\cB$ over a~commutative\footnote{For simplicity we define here the~lifted product only for commutative rings, though it may be easily extended to any ring $R$ if we consider a~tensor product $\cA\otimes\cB$ of a~free right $R$-module $\cA$ and a~free left $R$-module $\cB$.} ring~$R$ with boundary maps $\partial_{\cA}$, $\partial_{\cB}$, respectively. Thus if $R$ is in turn an~$\ell$-dimensional $\F_2$-algebra with some fixed basis\footnote{ For example, for $R_\ell$ the standard basis is $\tilde{R}_\ell = \{1,x,\dots,x^{\ell-1}\}$; for $\F_2G$ the~standard basis is $G$.} \[\tilde{R} = \{r_1,\dots,r_\ell\}\subseteq R,\] 
then the~boundary map~$\partial = \partial_{\cA}\otimes\id_\cB + \id_\cA\otimes\partial_{\cB}$ of $\cA \otimes \cB$ is an~\mbox{$R$-linear} map, and hence \mbox{an~$\F_2$-linear} map.  Therefore we can consider ${\cA \otimes \cB}$ also as a~chain complex over~$\F_2$, which we call the~\emph{lifted product} of the~chain complexes $\cA$ and $\cB$, and denote by $\cA \otimes_R \cB$ in order to emphasize the role of $R$. When $R=R_\ell$ we use a~shorter notation $\cA \otimes_\ell \cB$. \mbox{The~$n$-cells} of the~obtained chain complex $\cA \otimes_R \cB$ take the~form: $r a^{(i)} \otimes b^{(j)}$; where $r\in\tilde{R}$, $a^{(i)}$ is an~$i$-cell from $\tilde{\cA}_i$,  $b^{(j)}$ is a~$j$-cell from~$\tilde{\cB}_j$, and $i+j = n$. 

Note that any~matrix over an~$\ell$-dimensional $\F_2$-algebra~$R$ defines some binary linear code as its parity-check matrix\footnote{Any $m\times n$ matrix over $R$ can be also considered as an~$\ell m\times \ell n$ binary block matrix (see Section~\ref{sc:lp}).}. For example, any matrix over $R_\ell$ defines a~quasi-cyclic code. Any such linear code can be identified with the~corresponding \mbox{$1$-dimensional} chain complex $\cC_1\xrightarrow{\partial_{1}}  \cC_0$ such that $A$ is a matrix of the $R$-linear map $\partial_1$. 
Let $\cA$, $\cB$ be $1$-dimensional chain complexes over $R$ that correspond to the classical codes with parity-check matrices $A$, $B$ over $R$, respectively. Then it is not hard to see that the~CSS code~$\LP(A,B)$ defined in Section~\ref{sc:lp} corresponds to the~$2$-dimensional chain complex $\cA \otimes_R \cB$.



The remainder of the paper is structured as follows. \mbox{Section~\ref{sc:def}} contains some standard definitions and notations related to codes. In~Section~\ref{sc:lp} we give the definition of lifted product codes, where we also demonstrate that they contain many well-known QLDPC codes. Expanders are described in Section~\ref{sc:expanders}. Then we proceed with the proof of Theorem~\ref{th:main1} in Section~\ref{sc:lp-linear-dist}, and  in the~last section, we give some final remarks. The paper also contains three appendices, where we describe some well-known facts on the ring~$R_\ell$ (Appendix~\ref{sc:circulants}), study the decomposition of quasi-abelian LP~codes when the lift size is~odd (Appendix\footnote{Our main results do not rely on this supplementary material.}~\ref{sc:LP-decomp}), and give the~list of frequently used symbols and abbreviations (Appendix~\ref{sc:symbols}).

\section{Basic facts and definitions}\label{sc:def}

Here we fix notations and briefly recall some standard definitions related to classical and  quantum codes. More information can be found in a~survey~\cite{Babar:2015}.
In what follows, we assume that the reader is familiar with the standard algebraic objects like rings, fields, vector spaces, and modules (see~\cite{Dummit&Foote:2003} for a good reference). 

In this paper, it is convenient to consider vectors over a~field or a~ring as column vectors. Hence the~matrix-vector product is written as $Av$ instead of $A v^\T$. Besides, we denote by $\ker A$ and $\im A$ the~\emph{kernel} and the~\emph{image} of the~corresponding linear  operator $v\mapsto A v$, respectively. Note that $\im A$ coincides with the~\emph{column space} of the~matrix~$A$.
In~many places we use the standard notation $[n] = \{1,2,\dots,n\}$, where $n$ is a natural number. If $x,y$ are two binary vectors of length~$n$, then we denote by $x\cap y$ their~\emph{intersection}, i.e.,  the~vector~$x\cap y = (x_1y_1,\dots, x_n y_n)$. We say that an event $A_n$ occurs \emph{with high probability} (w.h.p) if $\mathbf{P}(A_n) \to 1$ as $n\to\infty$.  
Please, refer to Appendix~\ref{sc:symbols} for the list of symbols and abbreviations frequently used in our work. 

\subsection{Classical codes}

Consider a finite field\footnote{In this paper we consider only finite fields of characteristic $2$, but most of the results are valid for arbitrary finite fields.} $\F_q$ and an~\mbox{$n$-dimensional} vector space~$\F_q^n$ over~$\F_q$. A~\emph{linear $[n,k]_q$ code} is a~$k$-dimensional subspace $\cC\subseteq \F_q^n$, where the parameters~$n$ and $k$ are called the~\emph{length} and the~\emph{dimension} of $\cC$, respectively. We denote the~dimension $k$ of the~code~$\cC$ by $\dim \cC$. The~\emph{rate} of the~code~$\cC$ is equal to $k/n$.
The~elements of $\cC$ are called \emph{codewords}. \mbox{The~\emph{Hamming distance} $d(v, v')$} between vectors $v,v'\in \F_q^n$ is the number of positions in which they differ. The parameter 
\[d(\cC) = \min \{d(c, c') \mid c\ne c'; \ c, c'\in \cC\}\] 
is called the \emph{minimal distance} of $\cC$. By definition, we put~$d(\cC)=\infty$ when $k=0$. It is easy to see that $d(\cC)$ is equal to the minimal weight $\abs{c}$ of non-zero codewords, where the~\emph{weight~$\abs{c}$} is the number of non-zero components in~$c$. When $d(\cC)=d$ for a linear $[n,k]_q$ code $\cC$, we say that $\cC$ is an~$[n,k,d]_q$ code. 

A~linear $[n, k]_q$ code is usually defined either as the row space of a matrix $G$ called the \emph{generator matrix} or as the~kernel of a matrix $H$ called the \emph{parity-check matrix}. It~is easy to see that $GH^\T=\zm$,  $\rk G = k$, and $\rk H = n-k$. The~code defined by a~parity-check matrix $H$ is denoted by~$\cC(H)$. 

The vector space~$\F_2^n$ usually comes with the~standard scalar product $\langle x, y\rangle = x_1 y_1 + \dots + x_n y_n$. The \emph{dual code}~$\cC^\perp$ for a~linear $[n,k]_q$ code $\cC$ is the $[n,n-k]_q$ code 
\[
    \cC^\perp = \{ x \in \F_q^n \mid  \langle x, y \rangle = 0 \text{ for all }y\in\cC \}.
\]
It is not hard to see that a generator matrix for $\cC$ is a parity-matrix for $\cC^\perp$ and vice versa. 

Let $\pi\in \mathbf{S}_n$ be a permutation on the set~$[n]$. Given a~vector $v = (v_1,\dots, v_n)$, we denote by $\pi(v)$ the permuted vector $(v_{\pi(1)},\dots,v_{\pi(n)})$. We also extend this notation to sets of vectors of length~$n$ in a straightforward way: 
\[\pi(S) = \{\pi(v) \mid v\in S \}.\]
We say that two codes $\cC,\cC'\subseteq\F_q^n$ are (\emph{permutation}) \emph{equivalent} and write ${\cC \sim \cC'}$ if $\cC' = \pi(\cC)$ for some $\pi\in\mathbf{S}_n$. It is clear that equivalent codes have the same parameters $[n,k,d]_q$.

In~this paper we mostly deal with \emph{binary linear codes}, i.e., when $q=2$. In such cases we omit $q$ and simply write $[n,k]$ or $[n,k,d]$ code.

\subsection{Quantum CSS codes}\label{sc:stab-codes}
Consider the~$2^n$-dimensional Hilbert space $\C^{2^n}$, where the~$2^n$ standard basis vectors are indexed by binary vectors $u\in\F_2^n$ and denoted by $\ket{u}$. The space $\C^{2^n} = (\C^2)^{\otimes n}$ is usually called the \emph{$n$-qubit space}, where each component in the tensor product corresponds to one \emph{qubit}. 

A \emph{quantum code} $\cQ$ of \emph{length~$n$} and \emph{dimension~$k$} is \mbox{a~$2^k$-dimensional} subspace of~$\C^{2^n}$.  As in the classical case, we denote the dimension $k$ of the quantum code by $\dim\cQ$. In~\cite{CSS:1996, CSS2:1996} a~very important subclass of quantum codes called the~\emph{Calderbank-Shor-Steane (CSS) codes}, which is related to classical linear codes, was introduced. A~quantum \emph{CSS $[[n, k]]$ code}~$\cQ$ of length $n$ and dimension $k$ is defined by two classical linear codes $\CZ, \CX\subseteq \F_2^n$ such that  $\CX^\perp \subseteq \CZ$ and $k = \dim \CZ / \CX^\perp$ in the following way:  
\[\cQ = \spn_{\C} \bigl\{\textstyle \sum_{x \in \CX^\perp} \ket{z + x} \mid z \in \CZ \bigr\}. \]

It is easy to see that the property $\CX^\perp \subseteq \CZ$ is equivalent to $\CZ^\perp \subseteq \CX$. Moreover, if~$\HX$ is a parity-check matrix of~$\CZ$, $\HZ$ is the parity-check matrix of~$\CX$; then this property can be expressed as the following \emph{orthogonality condition}:

\begin{equation}\label{eq:comm-cond-CSS}
\HX \HZ^\T = \zm.
\end{equation}
Hence in order to define a~CSS code, we need two parity-check matrices $\HX$ and $\HZ$ such that \emph{every row of $\HX$ is orthogonal to every row of $\HZ$}. 
The dimension $k = \dim \cQ$ of the obtained quantum code~$\cQ$ is given by 
\begin{equation}\label{eq:CSS-dim}
    k = n - \rk \HX - \rk \HZ,
\end{equation}
since $k = \dim \CZ / \CX^\perp$. 

Given a~quantum~CSS code $\cQ$, we call the~codewords from the codes ${\CZ=\CZ(Q)}$ and~${\CX=\CX(Q)}$ the~\mbox{\emph{Z-codewords}} and \mbox{\emph{X-codewords}} of~$\cQ$, respectively.  Furthermore, the~\mbox{Z-codewords} from $\CX^\perp$ and the X-codewords from $\CZ^\perp$ are called \emph{degenerate}. This name can be explained if we interpret the~codewords from $\CZ$ and $\CX$ as undetected errors in a quantum system protected by the quantum code~$\cQ$. It can be shown that the~degenerate errors are precisely the~ones that don't change the state of the system. Therefore it makes sense to consider the~quotient spaces $\CZ/\CX^\perp$, $\CX/\CZ^\perp$ instead of $\CZ$, $\CX$. We say that codewords~$c,c'$ from the same equivalence class in these quotient spaces are \emph{equivalent} and denote this fact by $c \sim c'$. It is obvious that a~codeword $c$ is degenerate iff $c\sim \zm$, where $\zm$~is the zero vector. 
Let us note that the~degenerate codewords correspond to the~\emph{stabilizers} of the~quantum code~$\cQ$; while the~quotient spaces $\CZ/\CX^\perp$, $\CX/\CZ^\perp$ correspond to the~\emph{logical operators}, acting on the~quantum states protected by~$\cQ$.

Let us note that the spaces of degenerate Z-codewords $\CX^\perp$ and degenerate X-codewords $\CZ^\perp$ are generated by the rows of the parity-check matrices $\HX$ and $\HZ$, respectively. Hence the difference $c - c'$ of two equivalent codewords is always a linear combination of the rows from the corresponding parity-check matrix.

Since CSS codes have two types of codewords, they also have two types of minimum distances:
\begin{equation*}
    \dZ(\cQ) = \min_{z\in \CZ\setminus\CX^\perp} \abs{z}, \quad 
    \dX(\cQ) = \min_{x\in \CX\setminus\CZ^\perp} \abs{x}.
\end{equation*}

The minimum of these distances 
\[
    d(\cQ) = \min\{\dZ(\cQ),\dX(\cQ)\}
\] 
is called the \emph{minimum distance} of $\cQ$. As in the case of the~classical codes, we say that a~quantum CSS~$[[n, k]]$ code is an~$[[n,k,d]]$ code if $d(\cQ)=d$.

As in the case of classical codes, we also say that two CSS codes $\cQ,\cQ'$ of length $n$ are (\emph{permutation}) \emph{equivalent} and write $\cQ\sim \cQ'$ if $\CZ(\cQ) = \pi(\CZ(\cQ'))$ and $\CX(\cQ) = \pi(\CX(\cQ'))$ for some $\pi\in\mathbf{S}_n$.
It is clear that equivalent codes have the same parameters $[[n,k,d]]$, and, moreover, we see that ${\dZ(\cQ) = \dZ(\cQ')}$,  ${\dX(\cQ) = \dX(\cQ')}$. For any CSS code $\cQ$ we can also define the CSS code $\cQ^*$ with $\CZ(\cQ^*) := \CX(\cQ)$ and $\CX(\cQ^*) := \CZ(\cQ)$. It is obvious that:
\begin{equation}\label{eq:dual-CSS}
    \dZ(\cQ^*) = \dX(\cQ), \quad \dX(\cQ^*) = \dZ(\cQ).
\end{equation}

The CSS codes defined so far are binary quantum codes.
In~some cases, we also need \emph{non-binary CSS codes}.
They are defined by two matrices $\HX$ and $\HZ$ over $\F_q$ that satisfy equation~(\ref{eq:comm-cond-CSS}).  The definitions of dimension, minimum distance, degenerate codewords, equivalent codewords are obtained from the corresponding definitions for binary CSS codes if we replace $\F_2$ by $\F_q$. 



\subsection{Classical and quantum LDPC codes}

A classical \emph{low density parity check (LDPC) code}~\cite{gallager1963} is a linear code defined by a sparse binary parity-check matrix $H=\left(h_{i j}\right)_{m\times n}$. The sparseness usually means that the weights of all rows and columns in $H$ are upper bounded by some constant $w$ as the code length $n$ grows to infinity. It is helpful to define the~bipartite graph $\cT=\cT(H)$ called the \emph{Tanner graph}~\cite{Tanner1981}. In this graph the first part of nodes $v_1,\dots,v_n$ (called the \emph{v-nodes}) corresponds to the columns of $H$ (the variables), the second part of nodes $c_1,\dots,c_m$ (called the \mbox{\emph{c-nodes}}) corresponds to the rows of $H$ (the checks), and we connect a v-node $v_j$ with with a c-note $c_i$ whenever $h_{i j} = 1$, $i\in[m]$, $j\in [n]$. 

If the parity-check matrix $H$ is \emph{$(w_c, w_r)$-regular} (i.e., each column has weight $w_c$ and each row has weight $w_r$) then the corresponding Tanner graph is  also \emph{$(w_c, w_r)$-regular} (i.e., each v-node has degree $w_c$ and each c-node has degree $w_r$). We say that an LDPC code is \emph{$w$-limited} if the degree of each node in its Tanner graph is upper bounded by $w$. It is obvious that any LDPC code with $(w_c, w_r)$-regular parity-check matrix is $\max(w_c, w_r)$-limited.

In this paper by a \emph{quantum LDPC (QLDPC)} we mean a CSS $[[n, k, d]]$ code with sparse parity-check matrices $\HX$ and $\HZ$.
We can also introduce the~Tanner graph $\cT=\cT(\HX,\HZ)$ for any CSS $[[n, k, d]]$ code~$\cQ$ defined by $\HX$ and $\HZ$. In this case, the v-nodes correspond to $n$~qubits and the c-nodes to the rows of $\HX$ and $\HZ$ called the \emph{X-checks} and \emph{Z-checks}, respectively. We connect a v-node with a c-node if the corresponding qubit participates in the corresponding check.
Similar to the classic case we say that a QLDPC code is \emph{$w$-limited} if the degree of each node in its Tanner graph is upper bounded by $w$. This property is much more important in the quantum case due to the faulty nature of the current quantum hardware. It is clear that any CSS code with $(w_c, w_r)$-regular matrices $\mat{H}_X$ and $\mat{H}_Z$ is $\max(2w_c, w_r)$-limited.

\section{Lifted product}\label{sc:lp}

In this section, we give our~main definition of lifted product codes and show how it generalizes several known types of quantum codes~\cite{Tillich&Zemor:2009,Kovalev&Pryadko:HBP:2013,Haah:2011,Hagiwara:2007}. 
We also show how to estimate the~dimension of our codes in some important special cases, and give several examples. Finally, in~Subsection~\ref{sc:GHP}, we consider a~more specific case of our codes, first defined in~\cite{Panteleev&Kalachev:2019}, and show how to find the~dimension. Later, in Section~\ref{sc:lp-linear-dist}, we will use this special case to construct quantum LDPC with almost linear distance.

Before we move to the~description of our codes, we need some standard definitions and notations from algebra.  
Let $R$ be a ring. We denote the set of all $m\times n$ matrices over~$R$ by $\Mat_{m\times n}(R)$ or by $\Mat_{n}(R)$ in the case $m=n$. 
Consider a~field $\F$. In what follows by an~\emph{$\F$-algebra} we always mean \mbox{an~associative} algebra with a~multiplicative identity. It is well known~\cite[Theorem~1.3.1]{Drozd:1994} that every such algebra has a~faithful representation by $\ell\times\ell$ matrices over~$\F$, and for any element~$a\in R$ we denote by $\B(a)$  the~corresponding  $\ell\times\ell$ matrix over~$\F$. In the cases when $R$ is already a matrix ring over~$\F$ we assume that $\B(a) = a$.
Moreover, if $A=(a_{i j})_{m\times n}$ is a matrix over~$R$, then we consider the~corresponding block matrix \[ 
\B(A) = [\B(a_{i j})]_{m\times n}\in\Mat_{\ell m\times \ell n}(\F).
\]
It is easy to see that for any matrices $A$, $B$ over $R$ we have:
\begin{equation}\label{eq:block}
    \begin{split}
        \B(A B) &= \B(A)\B(B);
    \end{split}
\end{equation}

In this work we are mostly interested in the case when $R$ is a group algebra $\F_2G$ for some finite group~$G$. The~elements of $\F_2G$ are formal sums $\sum_{g\in G} \alpha_g g$, where $\alpha_g\in\F_2$. Consider elements $a=\sum_{g\in G}\alpha_g g$ and $b=\sum_{g\in G}\beta_g g$ from $\F_2 G$. Their sum $a+b$ and product $ab$ are defined as follows:
\[a+b = \sum_{g\in G} (\alpha_g + \beta_g) g, \quad ab = \sum_{g\in G}\bigg(\sum_{\substack{hr=g\\h,r\in G}}\alpha_h\beta_r\bigg) g.\]
If we index the~elements of the~group~$G=\{g_1,\dots,g_\ell\}$, then for every element $a=\sum_{g\in G}\alpha_g g\in  \F_2 G$  we define 
\begin{align*}
\bv(a) &= (\alpha_{g_1},\dots,\alpha_{g_\ell})\in \F_2^\ell;    \\
\B(a) &= \sum_{g\in G} \alpha_g\B(g), 
\end{align*}
where $\B(g)$ is the~permutation $\ell\times\ell$~matrix, defined as follows:
\[
\B(g)_{i j} = \begin{cases}
    1, & \text{ if } g_i = g g_j ;\\
    0, & \text{ otherwise}.
\end{cases}
\]
For every vector $v\in (\F_2G)^n$ we also consider the~block vector 
$\bv(v) = [\bv(v_1),\dots,\bv(v_n)]\in\F_2^{\ell n}$. 

It is not hard to see that for any $a\in \F_2 G$ the~weight of each row and each column of the~binary matrix~$\B(a)$ is the~same and equal to $\abs{\bv(a)}$.
Thus the~row and column weights of the~block matrix $\B(A)$, where $A=(a_{i j})_{m\times n}$ is a matrix over $\F_2 G$, can be easily found from the~corresponding \emph{weight matrix} (also called the~\emph{base matrix}) $W = W(A) = (w_{i j})_{m\times n}$, where $w_{i j} = \abs{\bv(a_{i j})}$. For example, $\B(A)$ is $w$-limited iff the~sum of elements of any row and column in $W$ is bounded above by $w$.
The matrix $W$ can be interpreted as the adjacency matrix for the~base Tanner graph $\cT$ that was used to obtain the~$G$-lifted Tanner graph $\hat{\cT}$ for the code $\cC(A)$ with the~parity-check matrix $\B(A)$, where $w_{i j}$ is equal to the number of edges between nodes $v_i$ and $v_j$ in the base Tanner graph~$\cT$.

Sometimes, where it does not cause confusion, we identify matrices and vectors over $R=\F_2 G$ with the~corresponding block matrices $\B(\cdot)$ and vectors $\bv(\cdot)$ over $\F_2$. For example, if we say that $A$ is $w$-limited, then it means that $\B(A)$ is \mbox{$w$-limited}. For~any vector $v\in R^n$ we denote by $\abs{v}$ the~Hamming weight $\abs{\bv(v)}$ of the corresponding block vector $\bv(v)\in\F_2^n$. We also often implicitly use the~following trivial equality:
\begin{equation}\label{eq:block-vec}
    \begin{split}
        \bv(Av) &= \B(A)\bv(v);
    \end{split}
\end{equation}
where $v$ is a vector, and $A$ is a matrix over~$\F_2G$. 

If $H$ is a~matrix over $R=\F_2 G$ where $\abs{G}=\ell$, then by $\cC(H)$ we denote the set
\[
\cC(H) = \{c\in R^n \mid Hc = \zm\}.    
\]
It is clear that the set~$\cC(H)$ is also a vector space over $\F_2$, and from~(\ref{eq:block-vec}) we see that it corresponds to the binary linear code $\cC(\B(H))\subseteq \F_2^{\ell n}$ defined by the binary block matrix $\B(H)$.

Let us note that if $G$ is abelian then $\F_2G$ is a~commutative ring.
Specifically, if $G$ is a~\emph{cyclic group~$\mathbf{C}_\ell$} of order $\ell$ generated by $x$, then $\F_2\mathbf{C}_\ell$ is isomorphic as a ring to the polynomial quotient ring $\F_2[x]/(x^\ell - 1)$. 
We~denote this ring by $R_\ell$ and usually represent its elements by polynomials in $x$.  If we index the~elements of the group~$\mathbf{C}_\ell$ as $g_i = x^{i-1}$, $i\in [\ell]$, then the~set of binary matrices $\B(a)$ where $a\in R_\ell$ is the~ring of circulant $\ell\times\ell$ matrices over $\F_2$. More details on $R_\ell$ can be found in Appendix~\ref{sc:circulants}.   

In this paper, with some~small abuse of terminology, \mbox{a~matrix} $A$ over $R_\ell$ and the~corresponding binary block matrix $\B(A)$ are called \emph{quasi-cyclic (QC)} of \mbox{\emph{lift size~$\ell$}} (also called the~\emph{circulant size}). Thus every~$\ell m\times \ell n$ binary QC~matrix of lift size~$\ell$ can be represented by some polynomial matrix $A\in\Mat_{m\times n}(R_\ell)$. The~class of QC~matrices is well known in coding theory since they are the parity-check matrices of \emph{quasi-cyclic} codes. At~the~same time, if $G$ is a~finite abelian group, then \mbox{matrices} over $\F_2 G$  and the corresponding binary classical error-correcting codes are called \emph{quasi-abelian}. Note that  most of the~best practical classical LDPC codes have QC~parity-check matrices. 
\begin{example}
Consider a~matrix $A\in\Mat_{2\times 3}(R_3)$ defined as follows:
\[ A = 
\left(\begin{array}{rrr}
1 & 0 & 1 + x^{2} \\
1 + x & 1 + x + x^{2} & x^{2}
\end{array}\right).
\]
It has the corresponding block matrix of lift size~$\ell=3$
\[
\B(A) = 
\left(\begin{array}{rrr|rrr|rrr}
1 & 0 & 0 & 0 & 0 & 0 & 1 & 1 & 0 \\
0 & 1 & 0 & 0 & 0 & 0 & 0 & 1 & 1 \\
0 & 0 & 1 & 0 & 0 & 0 & 1 & 0 & 1 \\
\hline
 1 & 0 & 1 & 1 & 1 & 1 & 0 & 1 & 0 \\
1 & 1 & 0 & 1 & 1 & 1 & 0 & 0 & 1 \\
0 & 1 & 1 & 1 & 1 & 1 & 1 & 0 & 0
\end{array}\right),
\]
and the~corresponding integer weight matrix
\[ W(A) = 
\left(\begin{array}{rrr}
1 & 0 & 2 \\
2 & 3 & 1
\end{array}\right).
\]
\end{example}

\subsection{Generalized bicycle (GB) codes}

The orthogonality condition~(\ref{eq:comm-cond-CSS}) from the~definition of CSS codes for the parity-check \mbox{matrices} $\HX$ and $\HZ$ is a~serious obstacle to designing good QLDPC codes using random-like constructions similar to the constructions used for classical LDPC codes. Thus it makes sense to consider large families of matrices of some particular form, where the orthogonality condition is always satisfied. One such quite general form for CSS codes was proposed in~\cite{Kovalev&Pryadko:HBP:2013}. We call these codes the~\emph{generalized bicycle~(GB)} codes since they include bicycle QLDPC codes~\cite{Mackay:2004} as a~special case.  Let~us briefly remind this construction. Consider two commuting binary $\ell\times \ell$ matrices $A$ and $B$, i.e., $AB = BA$. Let us define the~parity-check matrices as follows:
    \begin{equation}\label{eq:comm-ansatz}
    \HX = [A, B] \text{ and } \HZ = [B^\T, A^\T]. 
    \end{equation}

Then we see that $\HX \HZ^\T = AB + BA = \zm$. Hence the~commutativity condition (\ref{eq:comm-cond-CSS}) is always satisfied, and we obtain a~CSS code.  
It was proposed in~\cite{Kovalev&Pryadko:HBP:2013} to use binary circulant matrices $A$ and $B$ since they always commute. The corresponding class of codes includes the~bicycle codes from~\cite{Mackay:2004} as a special case when $B=A^\T$. 

Furthermore, we can obtain a more general class of codes if $A$ and $B$ are some $\ell\times\ell$ matrices representing elements from a~group algebra $\F_2G$ for an~abelian group $G$, ${\abs{G}=\ell}$. For example, the quasi-cyclic CSS codes from~\cite{Hagiwara:2007} can be considered as GB codes with $G = \mathbf{C}_P\times \mathbf{C}_{L/2}$. At~the same time, Haah's cubic codes~\cite{Haah:2011} can also be considered as GB codes with  $G=\mathbf{C}_L\times\mathbf{C}_L\times\mathbf{C}_L$, where $L$~is the lattice size.

\subsection{Hypergraph product (HP) codes}

Before we formally describe the LP codes in the next section, let us first remind the definition of \mbox{the~hypergraph} product (HP) codes from~\cite{Tillich&Zemor:2009}. Note that originally these codes were defined in terms of \mbox{hypergraphs}, but here it will be more convenient for us to give their definition in~a matrix form. 
\mbox{Suppose} that we have an~$[n_A, k_A, d_A]$ linear code~$\cC(A)$ and an~$[n_B, k_B, d_B]$ linear code~$\cC(B)$ with parity-check \mbox{matrices}\footnote{The parity-check matrices are not necessary full rank.} ${A\in\Mat_{m_A\times n_A}(\F_2)}$ and ${B\in\Mat_{m_B\times n_B}(\F_2)}$, \mbox{respectively}. Then the \emph{hypergraph product (HP)} code is the~CSS $[[N, K, d]]$ code denoted~$\HP(A,B)$ with the~parity-check matrices:
\begin{equation}\label{eq:hp-ansatz}
\begin{split}
    \HX &= [A\otimes I_{m_B}, I_{m_A}\!\otimes B],\\  
    \HZ &= [I_{n_A}\!\otimes B^\T, A^\T\otimes I_{n_B}],
\end{split}
\end{equation}
where the length~$N$ and the dimension~$K$ are as follows: 
\begin{equation}\label{eq:HP-dim}
\begin{split}
    N &= n_A m_B + n_B m_A,\\
    K &= 2k_A k_B - k_A(n_B - m_B) - k_B(n_A - m_A).
\end{split}
\end{equation}
As it was shown in~\cite{Tillich&Zemor:2009}, the minimum distance~$d$ of the hypergraph product code~$\HP(a,b)$ satisfies the following lower bound:
\[
d \ge \min(d_A, d_B, d_A^\T, d_B^\T),
\]
where the~parameters $d_A^\T$ and $d_B^\T$ are the minimal distances of the~\mbox{``transposed''} codes $\cC(A^\T)$ and $\cC(B^\T)$ defined by the~parity-check matrices $A^\T$ and $B^\T$, respectively. 

It is important to note that if the matrices $A$ and $B$ are $(w_c,w_r)$-limited then the parity-check matrices~$\HX$ and $\HZ$ of the code~$\HP(A,B)$ are \mbox{$w$-limited}, where $w = 2\max(w_c, w_r)$. Hence, using known asymptotically good families of classical LDPC codes with $(w_c,w_r)$-limited parity check-matrices, it is possible~\cite{Tillich&Zemor:2009} to construct $w$-limited CSS codes with asymptotically non-zero rate and $d=\Theta(\sqrt{N})$ as the code length~$N\to\infty$. 

\subsection{Non-binary HP codes}
Though HP~codes in the previous section are defined as binary CSS codes, it is also quite straightforward to define their non-binary versions over an~arbitrary finite field $\F_q$. Suppose that the characteristic of~$\F_q$ is $2$; then the~parity-check matrices $\HX$ and $\HZ$ for the~\emph{non-binary HP code} $\HP(A,B)$ are obtained from matrices $A$ and $B$ over $\F_q$ by (\ref{eq:hp-ansatz}) as for the case of binary HP codes. 

If the characteristics of $\F_q$ is not $2$, we need a slightly modified version of (\ref{eq:hp-ansatz}) in order to satisfy the orthogonality condition $\HX \HZ^\T$:

\begin{equation}\label{eq:hp-ansatz-general}
\begin{split}
    \HX &= [A\otimes I_{m_B}, -I_{m_A}\!\otimes B],\\  
    \HZ &= [I_{n_A}\!\otimes B^\T, A^\T\otimes I_{n_B}].
\end{split}
\end{equation}

\subsection{Lifted product (LP) codes}\label{sc:LP-code}

Here we consider a large family of quantum CSS codes that simultaneously generalize the GB codes and the HP codes.  These codes first appeared in our previous work~\cite{Panteleev&Kalachev:2019} in a~more restricted form under the~name \emph{generalized hypergraph product (GHP)} codes. In this work we present them in a~more general form and propose a~more informative name~---\emph{lifted product (LP) codes}. In some sense, we can view these codes as lifted versions of HP codes from~\cite{Tillich&Zemor:2009}, where we lift the~coefficients in the~matrices from the binary field~$\F_2$ up to some ring $R$ that is also a finite dimensional $\F_2$-algebra. Let~us remind that the elements of $R$ can be represented by binary $\ell\times\ell$ matrices. Hence when we define the~LP code over $R$ we identify $R$ with the~corresponding matrix ring\footnote{In each example of a~finite dimensional $\F_2$-algebra below we always provide the~corresponding matrix representation.}. Therefore without loss of generality in the definition below we assume that $R\subseteq \Mat_\ell(\F_2)$, and $\B(a) = a$ for every $a\in R$.

Thus LP codes are essentially HP codes, where we replace elements in their binary matrices $A$ and $B$ by some elements of a~matrix ring~$R\subseteq \Mat_\ell(\F_2)$. As the result, we have matrices $A\in \Mat_{m_A\times n_A}(R)$ and $B\in \Mat_{m_B\times n_B}(R)$ over some matrix ring~$R\subseteq \Mat_\ell(\F_2)$. If $M = (m_{ij})_{m\times n}$ is a matrix over $R$ we can consider its \emph{conjugate transpose} $M^* = (m^\T_{ji})_{n\times m}$, where $m^\T_{ji}$ is the standard transpose of the matrix $m_{ji}\in \Mat_\ell(\F_2)$. Let us emphasize that $\B(M^*) = \B(M)^\T$.
Now, as in the case of HP codes, we also introduce matrices: 
\begin{equation}\label{eq:lp-ansatz}
\begin{split}
    \HX &= [A\otimes I_{m_B}, I_{m_A}\!\otimes B],\\    
    \HZ &= [I_{n_A}\!\otimes B^*, A^*\otimes I_{n_B}].
\end{split}
\end{equation}

These matrices have coefficients from the matrix ring $R$, but we can consider the~corresponding binary block matrices $\B(\HX)$ and $\B(\HZ)$. In order to define a~CSS code, we need to make sure that these block matrices satisfy the~orthogonality condition  $\B(\HX)\B(\HZ)^\T = \zm$. Since $\B(\HZ)^\T = \B(\HZ^*)$, using (\ref{eq:block}) we see that condition is equivalent to ${\HX \HZ^* = \zm}$, and thus can be rewritten as:
\begin{align*}
    [A\otimes I_{m_B}, I_{m_A}\!\otimes B]
    \begin{bmatrix}
    I_{n_A}\!\otimes B\\
    A\otimes I_{n_B} \\
    \end{bmatrix}
    &= \zm,
\end{align*}
which can be further simplified to 
\begin{align}\label{eq:temp-cond}
    (A\otimes I_{m_B}\!)(I_{n_A}\!\otimes B) + (I_{m_A}\!\otimes B)(A\otimes I_{n_B}\!) = \zm.
\end{align}
It is not hard to see that if \emph{every element of $A$ commutes with every element of $B$} (we call such matrices \emph{element-wise commuting}) then using the~mixed-product formula 
\[(X\otimes Y)(X'\otimes Y') = (XX'\otimes YY')\] 
for the Kronecker product~$\otimes$ we have:
\begin{align*}
    (A\otimes I_{m_B}\!)(I_{n_A}\!\otimes B) = (I_{m_A}\!\otimes B)(A\otimes I_{n_B}\!) = A \otimes B.
\end{align*}
Since $R$ is a ring of characteristic\footnote{If $\ch R \ne 2$ we should define $\HX = [A\otimes I_{m_B}, -I_{m_A}\!\otimes B]$.} $2$, condition~(\ref{eq:temp-cond}) is satisfied, and
every pair of element-wise commuting matrices $A$ and $B$ defines the CSS code with the~parity-check matrices $\B(\HX)$ and $\B(\HZ)$. We denote this code by $\LP(A,B)$ and call the~\emph{lifted product (LP) code}. In what follows $\HX$ and $\HZ$ are also called the~\emph{parity-check matrices} for $\LP(A,B)$

One can easily verify that if we take $R=\F_2\cong \Mat_1(\F_2)$ then LP~codes coincide with HP~codes. We can also see that the generalized bicycle (GB) codes with two commuting $\ell\times \ell$ matrices $A$ and $B$ given in~(\ref{eq:comm-ansatz}) are also a special case of LP codes if we consider the matrices $A$ and $B$ as $1\times 1$ matrices over~$\Mat_\ell(\F_2)$.

The code length $N$ of the CSS code $\LP(A, B)$ is given by $N = \ell(n_A m_B + n_B m_A)$. We are not aware of any simple way to find the dimension~$K$ of the code~$\LP(A, B)$ in the general case, but it is possible to find or estimate $K$ in some special cases. For example, if $m_A < n_A$ and $m_B > n_B$, then by counting the number of rows in $\B(\HX)$ and $\B(\HZ)$ we obtain from~(\ref{eq:CSS-dim}) the~following lower bound:
\begin{equation*}
    K  \ge \ell(n_A - m_A)(m_B - n_B).
\end{equation*}
Below we consider some other examples.
\begin{example}\label{ex:HP-non-bin}
    If $R$ is a finite field $\F_q$ with $q=2^r$ elements and $A$, $B$ are some matrices over $\F_q$, then the~code $\LP(A,B)$ is obtained from the non-binary code $\HP(A,B)$ if we replace each non-binary element $\alpha$ in the parity-check matrix $\HX$ of $\HP(A,B)$ by the corresponding associated $r\times r$ binary matrix\footnote{The finite field $\F_q$, $q=2^r$, is an $r$-dimensional vector space over $\F_2$.  The~associated matrix $M_\alpha$ for $\alpha\in\F_q$ is the matrix of the $\F_2$-linear transform $x\mapsto \alpha x$.}~$M_\alpha$. At the~same time, in the~parity-check matrix $\HZ$ each non-binary element $\alpha$ is replaced by $M_\alpha^\T$. Thus we obtain binary block matrices that define the binary CSS code $\LP(A,B)$. It is clear that the length of the binary code $\LP(A,B)$ is $r$ times the length of the corresponding non-binary code~$\HP(A,B)$. It is also not hard to check that the~dimension of $\LP(A,B)$ is also $r$ times bigger:
    \begin{equation}\label{eq:LP-non-bin-dim}
        \dim \LP(A, B) = r\dim\HP(A,B).
    \end{equation}
\end{example}

\begin{example}\label{ex:LP}
If $A$ is some $m\times n$ matrix over a~commutative ring $R\subseteq \Mat_\ell(\F_2)$ and $B = A^*$, then we have the~LP code $\LP(A) = \LP(A, A^*)$ of length 
\begin{equation*}
N=\ell(n^2 + m^2)     
\end{equation*}
and dimension 
\begin{equation*}
K\ge \ell (n - m)^2,     
\end{equation*}
with parity-check matrices
\begin{equation*}\label{eq:lp1-ansatz}
\begin{split}
    \HX &= [A\otimes I_{n}, I_{m}\otimes A^*],\\    
    \HZ &= [I_{n}\otimes A, A^*\otimes I_{m}].
\end{split}
\end{equation*}
The lower bound for~$K$ easily follows from the~CSS dimension formula~(\ref{eq:CSS-dim}) if we take into account that matrices $\HX$ and $\HZ$ have no more than $2\ell m n$ rows in total. Moreover, if $A$ is full rank (as a binary block matrix), then the~matrices $A\otimes I_{m_B}$ and  $I_{n_A}\otimes A$ are also full rank. Hence all rows in $\B(\HX)$ and $\B(\HZ)$ are independent and we see that:
\begin{equation*}
K = \ell (n - m)^2.     
\end{equation*}
\end{example}

\subsection{Quasi-cyclic and quasi-abelian LP codes}\label{sc:qc-lp-codes}
One simple way to make all matrices over $R$ to be element-wise commuting is to enforce $R$ to be a~commutative ring.  In~this section we consider one particularly important special case of LP codes with commutative ring $R=R_\ell$ that we call \emph{quasi-cyclic (QC) LP codes}, and its generalization called \emph{quasi-abelian (QA) LP codes} when $R = \F_2 G$ for some finite abelian group~$G$.
As we saw at the beginning of this section, if $R$ is one of these rings we can easily control the density of the parity-check matrices $\B(\HX)$ and $\B(\HZ)$ by looking at the weight matrices $W(\HX)$ and $W(\HZ)$.  
In what follows, we are going to identify matrices and vectors over $R=\F_2 G$ with the~corresponding block matrices $\B(\cdot)$ and vectors $\bv(\cdot)$ over $\F_2$.

One big advantage of QC LP codes is that they are \mbox{constructed} from classical QC~LDPC codes. There are many examples of such codes with very good parameters.
\begin{example}
Consider the~$[155,64,20]$ QC LDPC code $\cC(A)$ of circulant size $\ell=31$ from~\cite{Tanner:2001}.
\[
A = \left(\begin{array}{rrrrr}
x & x^{2} & x^{4} & x^{8} & x^{16} \\
x^{5} & x^{10} & x^{20} & x^{9} & x^{18} \\
x^{25} & x^{19} & x^{7} & x^{14} & x^{28}
\end{array}\right)
\]
We can construct the~$8$-limited code~$\LP(A) = \LP(A,A^*)$ (see~Example~\ref{ex:LP}) with parameters $[[1054, 140, d]]$. We should note that after an~extensive simulation under the~BP-OSD decoder~\cite{Panteleev&Kalachev:2019} we have not found any non-degenerate codeword of weight less than $20$.
\end{example}

If an~element~$a = \sum_{g\in G} \alpha_g g\in \F_2 G$ is represented by the~matrix~$\B(a)\in\Mat_\ell(\F_2)$, then we have $\B(a)^\T = \B(\bar{a})$,  where $\bar{a} = \sum_{g\in G} \alpha_g g^{-1}\in \F_2 G$. The~map $a\mapsto \bar{a}$ is called the~\emph{antipode map} for $\F_2G$. If the~group~$G$ is abelian, then the~antipode map is an~automorphism of $\F_2G$ that respects the~weight of elements, i.e., for any $u,v\in\F_2G$ we have:
\[
\overline{u+v} = \bar{u}+\bar{v},\quad 
\overline{uv} = \bar{u}\bar{v},\quad
\abs{\bar{u}} = \abs{u}.
\]
For example, for any $a = a_0 + a_1 x+\dots +a_{\ell-1}x^{\ell-1}\in R_\ell$ we~have $\bar{a} = a_{0} + a_{\ell-1}x +\dots +a_{1}x^{\ell-1}$, i.e., $\bar{a} = x^\ell a(x^{-1})$.
If $A=(a_{i j})_{m\times n}$ is a matrix over $\F_2G$, then it is clear that the~conjugate transpose defined in subsection~\ref{sc:LP-code} is the matrix $A^*=(\bar{a}_{j i})_{n\times m}$. Since the antipode map is an~automorphism of $\F_2 G$, one can easily check that:
\begin{equation}\label{eq:rk-A}
    \rk_{\F_2} A = \rk_{\F_2} A^\T = \rk_{\F_2} A^* = \rk_{\F_2} \bar{A},
\end{equation}
where $\bar{A} = (\bar{a}_{i j})_{m\times n} = (A^*)^\T$. At the~same time, since
the~map $u\mapsto \bar{u}$ just permutes the bits in $u$, it is not hard to verify that $\LP(A,B) \sim \LP(\bar{A},\bar{B})$. Besides, for $\cQ = \LP(A,B)$ we see from equation~(\ref{eq:lp-ansatz}) that when we change the~roles of~the matrices $\HX$ and $\HZ$, we obtain the~code $\cQ^*$ that is permutation equivalent to $\LP(B^*,A^*)$ and also to $\LP(A^*,B^*)$. Hence we have that: 
\[
\cQ^* \sim \LP(A^*,B^*) \sim \LP(\overline{A^*},\overline{B^*}) = \LP(A^\T,B^\T),
\]
and by (\ref{eq:dual-CSS}) we have:
\begin{equation}\label{eq:LP-dual-dim}
\begin{split}
    \dZ(\LP(A,B)) &= \dX(\LP(A^\T, B^\T));\\        
    \dX(\LP(A,B)) &= \dZ(\LP(A^\T, B^\T)).        
\end{split}
\end{equation}
Equations~(\ref{eq:LP-dual-dim}) allow us to reduce a~problem of finding $\dX(\LP(A,B))$ to a~similar problem for $\dZ(\LP(A,B)$ and vice versa.

Let us note that QC LP codes are permutation equivalent to a~special case of hyperbicycle codes~\cite{Kovalev&Pryadko:HBP:2013}. If~we let ${\chi=1}$ in equation (19) from~\cite{Kovalev&Pryadko:HBP:2013}, then the~obtained CSS code is permutation equivalent to the~code $\LP(A,B)$; where $\ell:=c$, $A:=\sum_i a_i x^i$, $B:=\sum_i b_i x^i$. We should also note that some general results on hyperbicycle codes, such as Theorem~3 from~\cite{Kovalev&Pryadko:HBP:2013}, can be applied to QC LP codes if we assert that $\chi=1$.

\subsection{Special case of QC LP codes}\label{sc:GHP}

Let us describe a more specific case of QC LP codes used in~\cite{Panteleev&Kalachev:2019} in order to construct  several examples\footnote{In~\cite{Panteleev&Kalachev:2019} these codes were called GHP codes. The parity-check matrices of three examples were described in the appendix.} with good error-correcting performance in the~depolarizing channel. In~\cite{Panteleev&Kalachev:2019} for simplicity we considered only the case when the lift size is odd.  Here we consider the general case.

Let polynomial $b\in\F_2[x]$ be an~irreducible factor of $x^\ell - 1$ and $A=(a_{i j})_{m\times n}$ be a matrix over $R_\ell$. Consider the code~$\LP(A, b)$, where we understand $b$ as a~$1\times 1$ matrix over $R_\ell$. Hence the~parity-check matrices for this code have the~following form\footnote{See also equation~(\ref{eq:qc-ghp}) in the introduction.}
\begin{equation}\label{eq:lp-simple}
\HX = [A, b I_m], \quad \HZ = [\bar{b} I_n, A^*].    
\end{equation}

We~denote by $\phi_b$ the~homomorphism  from $R_\ell$ to the~quotient ring~$R_{(b)} = \F_2[x] / (b)$ given by the map~$u \mapsto u \mod b$. Since $b$ is an irreducible polynomial over $\F_2$, the quotient ring~$R_{(b)}$ is isomorphic to the~finite field~$\F_q$, where $q=2^{\deg b}$. Thus we can describe this homomorphism as 
\[
\phi_b(u) = u(\beta) = u_0 + u_1\beta + \dots + u_{\ell-1}\beta^{\ell-1}\in \F_q,
\] 
where $\beta\in \F_q$ is a root of $b$ and ${u = \sum_{i=0}^{\ell-1} u_i x^i\in R_\ell}$. For~example, if~$b = 1 + x$ then the~ring~$R_{(b)} = \F_2[x]/(1+x)$ can be identified with $\F_2$ and $\phi_b(u) = u(1) = u_0 + \dots +u_{\ell - 1}$ is just the number of ones modulo $2$ in the binary vector~$u$. 
The~homomorphism $\phi_b$ can be also extended to vectors and matrices over $R_\ell$ if we apply it to each element. \mbox{For~a~vector $v$} and a~matrix~$M$, the~result of its action is denoted by $v(\beta)$ and $M(\beta)$, respectively.


\begin{lemma}\label{lp-simple-dim}
Let $b\in\F_2[x]$ be an irreducible factor of $x^\ell - 1$, and $A$ be~a~matrix over $R_\ell$. The dimension of the~code~$\LP(A,b)$ is~equal to
\[\dim \LP(A,b) = \deg b\rbr{\dim \cC(A(\beta)) +  \dim \cC(A^\T(\beta))},\] 
where $\beta$ is a~root of the~polynomial~$b$ in the field~$\F_q \cong R_{(b)}$. 
\end{lemma}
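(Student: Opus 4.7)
The plan is to apply the CSS dimension formula~(\ref{eq:CSS-dim}),
\[
\dim \LP(A,b) = \ell(n+m) - \rk_{\F_2}\B(\HX) - \rk_{\F_2}\B(\HZ),
\]
and compute each of the two ranks separately via a Chinese Remainder decomposition of $R_\ell$ adapted to the factor~$b$. Factor $x^\ell - 1 = b^e \cdot c$ in $\F_2[x]$ with $\gcd(b,c) = 1$ (here $e\ge 1$ can exceed $1$ when $\ell$ is even), and set $R' = \F_2[x]/(b^e)$ and $R'' = \F_2[x]/(c)$, so that $R_\ell \cong R'\times R''$ as $\F_2$-algebras. Since $\HX$ is $R_\ell$-linear, the block matrix $\B(\HX)$ splits over this decomposition, and its $\F_2$-rank is the sum of the ranks of the two restricted maps.

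On the $R''$-component the element $b$ is a unit, so $[A, bI_m]$ is surjective onto ${R''}^m$, contributing $m(\ell - e\deg b)$ to the rank. On the $R'$-component I would compute the cokernel directly:
\[
{R'}^m / \im(\HX|_{R'}) = {R'}^m / (M + b{R'}^m) \cong (R'/bR')^m / \bar M = \F_q^m / \im A(\beta),
\]
where $M$ is the column span (as an $R'$-module) of $A$ reduced modulo $b^e$, and $\bar M$ is its image in $(R'/bR')^m = \F_q^m$. This quotient has $\F_q$-dimension $m - \rk A(\beta)$, hence $\F_2$-dimension $(m - \rk A(\beta))\deg b$. Subtracting from $\dim_{\F_2}{R'}^m = me\deg b$ and adding the $R''$-contribution yields $\rk_{\F_2}\B(\HX) = m\ell - (m-\rk A(\beta))\deg b$.

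For $\HZ = [\bar b I_n, A^*]$ I would run the same argument with respect to $\bar b$, using that $\bar b$ is an irreducible factor of $x^\ell - 1$ with the same degree and multiplicity as~$b$ and with root $\beta^{-1}$, together with the easy identity $\bar a(\beta^{-1}) = a(\beta)$ for every $a\in R_\ell$. In particular, reducing $A^*$ modulo $\bar b$ produces $A(\beta)^\T$, and the analogous cokernel computation gives $\rk_{\F_2}\B(\HZ) = n\ell - (n - \rk A(\beta))\deg b$. Substituting into the CSS formula and using $\dim\cC(A(\beta)) = n - \rk A(\beta)$, $\dim\cC(A^\T(\beta)) = m - \rk A(\beta)$ delivers the claimed identity.

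The main obstacle will be the cokernel step in the $R'$-component when $e>1$, because then $b$ is a nilpotent zero-divisor in $R'$ and one cannot simply invert it or separate the image into an ``$A$-part'' and a ``$b$-part'' by a dimension count. The observation that makes this go through uniformly is that modding out by $b{R'}^m$ first sends the problem into $\F_q^m$ and erases the dependence on $e$, leaving only the column space of $A(\beta)$ over~$\F_q$; this is precisely why the final answer depends only on $\rk A(\beta)$ and $\deg b$ and not on the multiplicity with which $b$ occurs in $x^\ell - 1$.
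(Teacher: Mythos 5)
Your proof is correct and reaches the right formula, but it takes a genuinely different route from the paper's. You work on the \emph{image/cokernel} side: you compute $\rk_{\F_2}\B(\HX)$ and $\rk_{\F_2}\B(\HZ)$ directly, splitting $R_\ell \cong R'\times R''$ by the Chinese Remainder Theorem with $R' = \F_2[x]/(b^e)$ and $R''=\F_2[x]/(c)$, observing that $b$ is a unit on the $R''$-factor and that passing to $R'/bR'\cong\F_q$ collapses the $R'$-cokernel onto $\F_q^m/\im A(\beta)$. The paper instead works on the \emph{kernel} side, writing $\ker H_{\mathrm X}^\T = \cC(A^\T)\cap\cC_b^m$ and $\ker H_{\mathrm Z}^* = \cC(A)\cap\cC_b^n$ where $\cC_b=\{c\in R_\ell : bc=0\}$ is the $[\ell,\deg b]$ cyclic code with parity polynomial $b$, and then using the generator polynomial $g=(x^\ell-1)/b$ to give an explicit bijection $u\mapsto gu$ between $\cC(A(\beta))$ over $\F_q\cong R_{(b)}$ and $\cC(A)\cap\cC_b^n$. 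The trade-offs: your approach is more computational and makes the role of the multiplicity $e$ completely explicit (and visibly irrelevant), which is reassuring; but it requires a second CRT decomposition adapted to $\bar b$ for the $\HZ$ side, together with the auxiliary facts that $\bar b$ is again an irreducible factor of $x^\ell-1$ of the same degree and multiplicity, and that $\bar a(\beta^{-1})=a(\beta)$. The paper's kernel argument absorbs the multiplicity automatically, because $\cC_b$ has dimension $\deg b$ regardless of $e$, and it treats both sides symmetrically via the same map $u\mapsto gu$ without needing a separate CRT split for $\bar b$, making it somewhat more economical.
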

\begin{proof}
Trivially, using (\ref{eq:rk-A}) we have: 
\begin{align*}
\dim_{\F_2}\!\ker H^\T_{\mathrm{X}} &= \ell m - \rk_{\F_2}\!\HX;\\  
\dim_{\F_2}\!\ker H^*_{\mathrm{Z}} &= \ell n - \rk_{\F_2}\!\HZ.  
\end{align*}
Besides, from~(\ref{eq:lp-simple}), taking into account that ${H^\T_{\mathrm{X}} = \left[\begin{smallmatrix} A^\T\!\\ \!bI_m\! \end{smallmatrix}\right]}$, and ${H^*_{\mathrm{Z}} = \left[\begin{smallmatrix} bI_n\\ A \end{smallmatrix}\right]}$,  it follows that:
\begin{align*}
    H^\T_\mathrm{X} u = \zm  &\iff A^\T u = \zm, bu = \zm; \\
    H^*_{\mathrm{Z}} u = \zm &\iff A u = \zm, bu = \zm;
\end{align*}
and we have\footnote{Here $\cC(H) = \{c \mid Hc = 0\}$ is considered as a vector space over $\F_2$.} $\ker H^\T_{\mathrm{X}} = \cC(A^\T)\cap\cC_b^m$, $\ker H^*_{\mathrm{Z}} = \cC(A)\cap\cC_b^n$, where $\cC_b = \{c\in R_\ell \mid b c = 0\}$ is the~cyclic $[\ell, \deg b]$-code defined by the~parity polynomial~$b$. Clearly, ${g = (x^\ell - 1)/b}$ is the~generator polynomial for $\cC_b$, and the~elements of the~finite field~$R_{(b)} \cong \F_q$ are in one-to-one correspondence, given by the~map~${r\mapsto gr}$, with the~elements of $\cC_b$. Indeed, since ${gb = 0}$, and hence $g(r + b R_\ell) = gr$, we see that this one-to-one correspondence is defined correctly. Moreover, for every $r\in R_\ell$ it follows that $g\phi_b(r) = g(r + bR_\ell) = gr$, where we used that the~homomorphism $\phi_b\colon R_\ell\to\F_2[x] / (b)$ can be also defined as $\phi_b\colon r\mapsto r + b R_\ell$. Therefore for every $c=gu\in \cC_b^n$ we have:
\[
c\mkern-2mu\in\mkern-2mu \cC(\!A\mkern-1mu) \Longleftrightarrow  gAu = 0  \Longleftrightarrow g\phi_b(\!A u\mkern-1mu) = 0  \Longleftrightarrow u\mkern-2mu \in\mkern-2mu \cC(\phi_b(\!A)),
\]
and the~map~$u\mapsto gu$ also gives a one-to-one correspondence between $\cC(A)\cap\cC_b^n$ and $\cC(\phi_b(A))$. Since ${R_{(b)}\cong \F_q}$,  ${q=2^{\deg b}}$,  we see that $\dim_{\F_2}\! \cC(A)\cap\cC_b^n = \deg b \cdot \dim \cC(A(\beta))$. By exactly the same arguments as before, we also find that $\dim_{\F_2}\! \cC(A^\T)\cap\cC_b^m = \deg b \cdot \dim \cC(A^\T(\beta))$.
Finally, using CSS dimension formula~(\ref{eq:CSS-dim}) we get:
\begin{align*}
\dim \LP(A,b) &= \ell(n + m) - \rk_{\F_2}\!\HX - \rk_{\F_2}\!\HZ\\
&= \dim_{\F_2}\!\ker H^\T_\mathrm{X} + \dim_{\F_2}\!\ker H^*_{\mathrm{Z}}\\
&= \dim_{\F_2}\! \cC(A^\T)\cap\cC_b^m + \dim_{\F_2}\! \cC(A)\cap\cC_b^n\\
&= \deg b\rbr{\dim \cC(A^\T(\beta)) + \dim \cC(A(\beta))},
\end{align*}
and the lemma is proved.
\end{proof}

\begin{remark}
An~alternative proof of Lemma~\ref{lp-simple-dim} in the case of odd~$\ell$ can be found in Appendix~\ref{sc:LP-decomp}. 
\end{remark}

\begin{remark}
If $b=1+x$ then $R_{b}\cong \F_2$, $\beta = 1$, and we obtain a~slightly simpler dimension formula: 
\begin{equation}\label{eq:lp-simple-dim}
   \dim \LP(A,1+x) = \dim \cC(A(1)) +  \dim \cC(A^\T(1)). 
\end{equation}
We should also emphasize that in this case the~cyclic code~$\cC_b$ is the~$[\ell,1,\ell]$~repetition code with $g=\sum_{i=0}^{\ell-1} x^i$.
\end{remark}

\section{Expanders}\label{sc:expanders}

In this section, we remind some standard definitions and facts related to expander graphs and codes. A~more detailed treatment of these subjects can be found in~\cite{Hoory:2006}. The~main theoretical result of this section is Proposition~\ref{pr:alpha-beta}, which gives us a~way to 
construct quasi-cyclic matrices of very larger lift sizes with good expansion properties. We use this construction to obtain our main result in Section~\ref{sc:lp-linear-dist}. As a~byproduct of this construction, we also get Corollary~\ref{cor:QC-dist}, which shows us how to get asymptotically good families of classical quasi-cyclic LDPC codes with a~close to optimal lift size.

\subsection{Expander graphs}

Let $G$ be a~simple\footnote{Simple graphs do not have loops and multiple edges.} graph with the set of vertices $V(G)$ and the set of edges $E(G)$. If vertices $v, v'\in V(G)$ are connected by an edge $e\in E(G)$, we call $v, v'$ \emph{adjacent} and denote this fact by $v\sim v'$ or by $v\sim_e v'$ when we want to emphasize the edge~$e$. We also say that a~vertex~$v\in V(G)$ is \emph{incident} to an~edge~$e\in E(G)$ if $v$ is one of the two vertices that~$e$ connects. The \emph{degree} of a vertex $v$ denoted by~$\deg v$ is the number of edges connected to it. A graph~$G$ is called \emph{$w$-regular} if all its vertices have degree~$w$.  The \emph{adjacency matrix} for a~graph~$G$ with $V(G)=\{v_1,\dots,v_n\}$ is the matrix $A(G) = (a_{i j})_{n\times n}$, where $a_{i j}$ is the number of edges $e\in E(G)$ such that $v_i \sim_e v_j$.
Since $A(G)$ is a symmetric matrix, it has $n$ real-valued eigenvalues $\lambda_1 \ge \dots \ge \lambda_n$. Let~$\lambda(G) = \max(\abs{\lambda_2}, \abs{\lambda_n})$. We call an~$n$-vertex~$w$-regular graph $G$ an~\emph{$(n, w, \lambda)$-expander} if $\lambda(G) = \lambda$.

For any $S\subseteq V(G)$ we denote by $E(S)$ the set of internal edges for $S$, i.e., 
\[
    E(S) = \{e\in E(G)  \mid \exists v,v'\in S\colon  v \sim_e v' \}.
\]
In the next section, we will need the following very well-known property of the expander graphs.
\begin{lemma}\label{lm:EML}
    If $G$ is an~$(n, w, \lambda)$-expander and $\abs{S} \le \alpha n$, then 
    \[ \abs{E(S)} \le \frac12\rbr{\alpha + \frac{\lambda}{w}} w\abs{S}.\]    
\end{lemma}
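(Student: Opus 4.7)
The plan is to prove this via the standard spectral argument behind the Expander Mixing Lemma. The key observation is that if $\chi_S\in\{0,1\}^n$ is the indicator vector of $S$, then the quadratic form $\chi_S^\T A(G) \chi_S$ counts (with multiplicity $2$ on each edge) the internal edges of $S$, i.e., $\chi_S^\T A(G)\chi_S = 2\abs{E(S)}$.

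First I would expand $\chi_S$ in an orthonormal eigenbasis $u_1,\dots,u_n$ of the symmetric matrix $A(G)$ with eigenvalues $\lambda_1=w\ge\lambda_2\ge\dots\ge\lambda_n$. Since $G$ is $w$-regular, the top eigenvector is $u_1 = \tfrac{1}{\sqrt{n}}\ones{n}$, so the coefficient on it is $c_1 = \langle \chi_S, u_1\rangle = \abs{S}/\sqrt{n}$. Writing $\chi_S = \sum_i c_i u_i$, Parseval gives $\sum_i c_i^2 = \abs{S}$, and therefore
\[
2\abs{E(S)} \;=\; \sum_{i=1}^n \lambda_i c_i^2 \;=\; w\cdot\frac{\abs{S}^2}{n} \;+\; \sum_{i=2}^n \lambda_i c_i^2.
\]
The tail sum is controlled by the spectral gap: since $\abs{\lambda_i}\le \lambda$ for $i\ge 2$,
\[
\Bigl|\sum_{i=2}^n \lambda_i c_i^2 \Bigr| \;\le\; \lambda \sum_{i=2}^n c_i^2 \;=\; \lambda\!\left(\abs{S} - \frac{\abs{S}^2}{n}\right) \;\le\; \lambda\abs{S}.
\]

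Combining these two estimates yields $2\abs{E(S)} \le w\abs{S}^2/n + \lambda\abs{S}$. Now I would plug in the hypothesis $\abs{S}\le \alpha n$, which gives $w\abs{S}/n \le \alpha w$, so
\[
\abs{E(S)} \;\le\; \tfrac{1}{2}\abs{S}\!\left(\tfrac{w\abs{S}}{n} + \lambda\right) \;\le\; \tfrac12\!\left(\alpha + \tfrac{\lambda}{w}\right) w\abs{S},
\]
as claimed.

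I do not anticipate any real obstacle: the proof is essentially the one-sided form of the classical Expander Mixing Lemma applied with $T=S$. The only things to be careful about are (i) the factor of $2$ coming from the fact that $\chi_S^\T A\chi_S$ double-counts internal edges of a simple graph, and (ii) making sure the lemma is stated and used for simple graphs so that the diagonal of $A(G)$ is zero and there is no separate contribution from loops; the definition of expander in the paper already restricts to simple graphs, so this is automatic.
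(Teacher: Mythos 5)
Your proof is correct and follows essentially the same route as the paper: both reduce the claim to the inequality $2\abs{E(S)} \le \tfrac{w}{n}\abs{S}^2 + \lambda\abs{S}$ and then substitute $\abs{S}\le\alpha n$. The only difference is that the paper invokes the expander mixing lemma (with $T=S$ and $E(S,S)=2\abs{E(S)}$) as a black box, whereas you re-derive its one-sided form from scratch via the spectral decomposition of $A(G)$; that re-derivation is sound, including the observations about the factor of $2$ and the absence of a diagonal contribution for simple graphs.
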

\begin{proof}
If $G$ is an~$(n, w, \alpha)$-expander and $S\subseteq V(G)$, then by the~expander mixing lemma~\cite[Lemma~2.5]{Hoory:2006} we have: 
\begin{equation}\label{eq:EML}
        \abs{\abs{E(S, S)} - \frac{w \abs{S}\abs{S}}{n}} \le \lambda \sqrt{\abs{S}\abs{S}},
\end{equation}
where $E(S,S) = \{ (v,v')\in S\times S \mid  v \sim v'\}$. Let us emphasize that each edge $e\in E(G)$ that connects $v,v'\in S$ gives two different pairs $(v,v')$ and $(v',v)$ in the~set~$E(S,S)$. Hence $\abs{E(S,S)} = 2\abs{E(S)}$ and we obtain:
\[
    2\abs{E(S)} \le \frac{w}{n} \abs{S}^2 + \lambda\abs{S} \le \rbr{\frac{\abs{S}}{n} + \frac{\lambda}{w}}w\abs{S}.
\]
Since $\abs{S} \le \alpha n$, we have $ \abs{E(S)} \le \frac12\rbr{\alpha + \frac{\lambda}{w}} w\abs{S}$.
\end{proof}

\subsection{Expanding binary matrices}

We say that a binary $m\times n$ matrix $H$ is \emph{$(\alpha, \beta)$-expanding}, where $\alpha,\beta$ are some positive real numbers, if for all $x\in \F_2^n$  such that $\abs{x}\le \alpha n$  we have $\abs{H x} \ge \beta \abs{x}$. It~is obvious that if $H$ is a parity-check matrix of some code~$\cC$, then $d(\cC) > \alpha n$. Furthermore, we also say that an $m\times n$ matrix $A$ over $R_\ell$ is \emph{\mbox{$(\alpha,\beta)$-expanding}} if the~corresponding binary block \mbox{matrix} $H=\B(A)$ is \mbox{$(\alpha,\beta)$-expanding}. It is clear that if $\B(A)$ is a~parity-check matrix of some QC code~$\cC$, then $d(\cC) > \alpha \ell n$. 

The following important proposition shows that for a~wide range of parameters there exists a~$w$-limited QC matrix $A$ such that $A$ and $A^\T$ are both $(\alpha,\beta)$-expanding.

\begin{proposition}\label{pr:alpha-beta}
    For every $\eps \in (0,1)$ there exist $\alpha$, $\beta$, $\gamma$, $w$ such that for any natural numbers $\ell > 1$ and $n \ge \gamma\ln \ell$ there exists a~\mbox{$w$-limited} QC~matrix $A\in \Mat_{m\times w n}(R_\ell)$, $m\le \eps w n$, such that the~matrices~$A$ and $A^\T$ are both $(\alpha, \beta)$-expanding.
\end{proposition}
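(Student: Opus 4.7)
\medskip

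\noindent\textbf{Proof plan for Proposition~\ref{pr:alpha-beta}.}

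The plan is to construct $A$ as the parity-check matrix of a Tanner-style \emph{expander code}~\cite{Sipser&Spielman:1996,Zemor:2001} defined on a~large bipartite expander graph $\hat{G}$ obtained as a~shift $\ell$-lift of a~small base graph, together with a~high-rate inner code $\cC_0$ whose dual also has good distance.

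First, I would invoke Theorem~1.2 of~\cite{Agarwal:2019} (as hinted at in the~introduction of this paper) to obtain, for every $n\ge\gamma\ln\ell$ with $\gamma$ a~sufficiently large constant depending on $\eps$, a~$w$-regular bipartite graph $\hat G$ on $2\ell n$ vertices (with $n$ left orbits and $n$ right orbits) that is an explicit shift $\ell$-lift of a~base graph $\Gamma$ on $2n$ vertices, and such that $\hat G$ is a~$(2\ell n, w, \lambda)$-expander with normalized second eigenvalue $\lambda/w\le\delta$, where $\delta$ is~an arbitrarily small constant to be chosen in terms of~$\eps$. The requirement $n\ge\gamma\ln\ell$ is precisely the~range in which such a~lifted near-Ramanujan construction exists. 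Because $\hat{G}$ is~a~shift $\ell$-lift, the~natural labeling of edges by orbits yields an~edge-vertex incidence matrix that is quasi-cyclic of lift size~$\ell$.

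Next, I would fix a~small inner code $\cC_0\subseteq\F_2^w$ of~rate $\ge 1-\eps/2$ whose minimum distance $d_0$ and whose dual minimum distance $d_0^\perp$ are both at least some absolute constant~$\delta_0 w>0$ (this is possible, for any fixed~$\eps$, by choosing~$w$ sufficiently large and picking~$\cC_0$ via a~standard Gilbert--Varshamov-type argument). Define $A\in\Mat_{m\times wn}(R_\ell)$ as the~parity-check matrix of the~Tanner code $\cT(\hat{G},\cC_0)$: the columns of~$\B(A)$ are indexed by edges of~$\hat{G}$; for~each vertex $v\in V(\hat{G})$ the~rows at $v$ enforce that the~restriction of a~codeword to the~$w$~edges incident to~$v$ lies in~$\cC_0$. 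The quasi-cyclic structure of the~edge-vertex incidence matrix of~$\hat{G}$ transfers to~$A$, giving a~QC~matrix of lift size~$\ell$. The~dimensions satisfy $m = 2n(w-\dim\cC_0) \le \eps w n$ and $A$~is $w$-limited by construction.

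To prove that $A$ is $(\alpha,\beta)$-expanding, take $x\in\F_2^{\ell w n}$ with $S=\supp(x)$ of size $|S|\le\alpha\ell w n$, and for each vertex $v$ let $s_v$ be the number of edges of $S$ incident to $v$. Partition the touched vertices into $T_{\mathrm{low}}=\{v:0<s_v<d_0\}$ and $T_{\mathrm{high}}=\{v:s_v\ge d_0\}$. Every $v\in T_{\mathrm{low}}$ contributes a~nonzero local pattern of weight less than $d_0$, which cannot be a~codeword of~$\cC_0$, so it triggers at~least one failing check; hence $|\B(A)x|\ge|T_{\mathrm{low}}|$. Double counting gives $2|S|=\sum_v s_v \le (d_0-1)|T_{\mathrm{low}}| + w|T_{\mathrm{high}}|$. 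Using Lemma~\ref{lm:EML} applied to $T_{\mathrm{high}}$ inside~$\hat{G}$ and the trivial bound $|T_{\mathrm{high}}|\le 2|S|/d_0$, one shows that for~$\alpha$ and $\delta$ small enough the~contribution $w|T_{\mathrm{high}}|$ is absorbed by $(d_0-1)|T_{\mathrm{low}}|$, yielding $|T_{\mathrm{low}}|\ge\beta|S|$ for some positive constant~$\beta$.

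The analogous statement for $A^\T$ is proved by a~dual argument: for $y\in\F_2^{\ell m}$ of small support, $\B(A)^\T y$ is a~sum of rows of~$\B(A)$; at each vertex $v$ where $y$~is~nonzero on the local rows, the~local contribution is a~nonzero codeword of~$\cC_0^\perp$ of weight $\ge d_0^\perp$, and cancellations between vertices can only happen on edges whose both endpoints are \emph{activated} by~$y$. Lemma~\ref{lm:EML} again bounds such internal edges, so a~positive fraction of the~$\ge d_0^\perp$ contributions at each active vertex survives, giving $|\B(A)^\T y|\ge\beta|y|$. The~main obstacle is the~careful joint choice of the~constants $w,\delta,\delta_0,\alpha,\beta$ so that both the~primal and the~dual expansion arguments close simultaneously; the~crucial technical input that unlocks this, however, is the~QC near-Ramanujan lift from~\cite{Agarwal:2019}, which lets us make $\lambda/w$ as small as needed while preserving the~quasi-cyclic structure.
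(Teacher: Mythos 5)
Your proposal is correct and tracks the paper's own proof almost step for step: construct a small near-Ramanujan $w$-regular graph, take a shift $\ell$-lift with $n=\Theta(\log\ell)$ orbits using the random-lift result of~\cite{Agarwal:2019}, form the Tanner code $\cT(\hat{G},\cC_0)$ with a Gilbert--Varshamov inner code $\cC_0$ whose rate is at least $1-\eps/2$ and whose distance and dual distance are both $\Omega(w)$, and then prove $(\alpha,\beta)$-expansion of both $H$ and $H^\T$ via the expander mixing lemma (Lemma~\ref{lm:EML}), exactly as the paper does in Lemma~\ref{lm:alpha-beta-exp}. The only cosmetic deviation is that you frame $\hat{G}$ as bipartite (Z\'emor-style) whereas the paper uses a general simple $w$-regular graph on $2n$ vertices; this does not change the argument.
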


In order to prove this proposition, we need to describe some particular type of Tanner codes~\cite{Tanner1981} used in~\cite{Sipser&Spielman:1996, Zemor:2001}. 
Consider a simple $w$-regular graph $G$ with $2n$ vertices and a linear ${[w, w-r]}$~code~$\cC_0$. The~idea of the~Tanner code $\cT(G, \cC_0)$ is to assign its code bits to the~$wn$ edges of $G$ and for each vertex~$v\in V(G)$ constrain the~bits connected to $v$ by the~code~$\cC_0$. More formally, if we index the~edges of $G$ by the~set~$[wn]$ and for each vertex~$v\in V(G)$ denote by~$N(v)$ the~set of indexes for the~edges connected to~$v$; then the~Tanner code is defined as
\[
\cT(G,\cC_0) = \{c\in \F_2^{w n} \mid \forall v\in V(G)\colon c|_{N(v)}\in\cC_0 \},
\]
where $c|_{N(v)}$ is obtained from $c$ by deleting all the~bits with indexes outside of~$N(v)$. 

We suppose that $\cC_0$ always comes with the some fixed parity-check matrix\footnote{Since $\cC_0$ is a~$[w,w-r]$ code, the~rows of $H_0$ are linearly independent.} $H_0\in \Mat_{r \times w}(\F_2)$.
The parity-check matrix~$H$ for the~code~$\cT(G,\cC_0)$ consists of $2n$ groups of rows $(\mathcal{R}_v)_{v\in V(G)}$, where each group $\mathcal{R}_v$ corresponds to the~$r$ parity-checks of $\cC_0$ related to the vertex $v\in V(G)$, i.e., $\rho|_{N(v)}$ is one of the rows from~$H_0$ for~$\rho\in \mathcal{R}_v$.
Hence $H\in\Mat_{2rn\times wn}(\F_2)$, and the~code~$\cT(G, \cC_0)$ has non-zero dimension whenever $2r < w$. Moreover, it is not hard to see that if $2r < w$ then $H$ is a~$w$-limited matrix. 

\begin{remark}\label{rm:QC-tanner}
Let us warn the~reader that the~code~$\cT(G, \cC_0)$ depends on how we index  the~edges of $G$ by the~set~$[wn]$. Moreover, the~parity-check matrix $H$ for the~code $\cT(G, \cC_0)$ also depends on how we order its rows. 
Let $\hat{G}$ be an~$\ell$-lift of a~smaller base graph $G$, obtained by fixing in $G$ an~arbitrary edge orientation and assigning to each oriented edge $e=(v,v')$ some shift $s=s(e)$, which corresponds to the~permutation $\pi\in\mathbf{S}_\ell$ (the~right cyclic shift  by $s$ positions) from Fig.~\ref{fg:lifting}. 
Denote by $h(e)$ (resp. $h'(e)$) the~column of $H_0\in \Mat_{r \times w}(\F_2)$, corresponding to the~connection of the~edge $e$ to the~vertex $v$ (resp. $v'$) in the~Tanner code $\cT(G, \cC_0)$. The~parity-check matrix of this code looks as follows:
\[ H = \left(\begin{array}{ccc}
     & \mathbf{0} & \\
     \cdots & h'(e) & \cdots\\
     & \mathbf{0} & \\
     \cdots & h(e) & \cdots\\
     & \mathbf{0} & 
   \end{array}\right)
\]
if we arrange its~rows to make each group of rows  $(\mathcal{R}_v)_{v\in V(G)}$ a~contiguous block.
Then it is not hard to see that $\cT(\hat{G}, \cC_0)$ is a~QC~code of lift size~$\ell$ with the~following QC parity-check matrix $\hat{H}$, defined as a~matrix over the~ring~${R_\ell=\F_2[x]/(x^\ell - 1)}$: 
\[ \hat{H} = \left(\begin{array}{ccc}
     & \mathbf{0} & \\
     \cdots & x^{s(e)} h'(e) & \cdots\\
     & \mathbf{0} & \\
     \cdots & h(e) & \cdots\\
     & \mathbf{0} & 
   \end{array}\right). 
\]
\end{remark}

The~next lemma shows that if $G$ is a~sufficiently good expander and $\cC_0,\cC_0^\perp$ have relatively large minimum distances, then both $H$~and~$H^\T$ are $(\alpha,\beta)$-expanding for any size of~$G$.
\begin{lemma}\label{lm:alpha-beta-exp}
    Let $H$ be the parity-check matrix of~$\cT(G,\cC_0)$, where $G$ is a~$(2n, w, \lambda)$-expander\textup{;} $\cC_0$~is~a~${[w, w - r,d]}$~code.
    If~$\lambda < \delta w$, ${d \ge \delta w}$ and $d(\cC_0^\perp) \ge \delta w$, then the~binary matrices $H$, $H^\T$ are  \mbox{$(\alpha, \beta)$-expanding} for all $\alpha < \frac{\delta}{w}\rbr{1-\frac{\lambda}{\delta w}}$ and ${\beta=\frac{1}{\delta w}\rbr{\delta-\alpha w-\frac{\lambda}{w}}}$.
\end{lemma}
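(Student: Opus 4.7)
The plan is to establish the two expansion properties for $H$ and $H^\T$ separately, in each case by applying the expander mixing lemma (Lemma~\ref{lm:EML}) to a carefully chosen subset of vertices of~$G$.

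For $H$, the approach is as follows. Given $x\in\F_2^{wn}$ with support $T$ satisfying $\abs{T}\le\alpha wn$, for each $v\in V(G)$ introduce the local weight $\eta_v:=\abs{x|_{N(v)}}$ and the local syndrome weight $\mu_v:=\abs{H_0\,x|_{N(v)}}$, so that $\abs{Hx}=\sum_v\mu_v$. Define the ``touched'' set $S:=\{v:\eta_v>0\}$ and the ``bad'' set $B:=\{v\in S:x|_{N(v)}\in\cC_0\}$, on which $\mu_v=0$. Since $d(\cC_0)\ge\delta w$, every $v\in B$ satisfies $\eta_v\ge\delta w$, and trivially $\abs{Hx}\ge\abs{S}-\abs{B}$. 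One then derives a lower bound on $\abs{S}$ from $T\subseteq E(S)$ combined with Lemma~\ref{lm:EML} (using $\abs{S}/(2n)\le\alpha w$), and an upper bound on $\abs{B}$ by combining $\delta w\abs{B}\le 2\abs{E(B)}+\abs{T\cap\partial B}$ with the expander mixing bound $\abs{E(B)}\le\tfrac12(\abs{B}/(2n)+\lambda/w)w\abs{B}$. The condition $\alpha w<\delta-\lambda/w$ then ensures these bounds balance to give $\abs{Hx}\ge\beta\abs{T}$.

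For $H^\T$, suppose $y\in\F_2^{2rn}$ has $\abs{y}\le 2\alpha rn$, decomposed as $y=(y_v)_{v\in V(G)}$ with $y_v\in\F_2^r$, and let $U:=\{v:y_v\ne 0\}$. Because $H_0$ has full row rank, $H_0^\T y_v$ is a nonzero element of $\cC_0^\perp$ (the row span of $H_0$), so $\abs{H_0^\T y_v}\ge d(\cC_0^\perp)\ge\delta w$. At each edge $e=(v,v')$, the entry $(H^\T y)_e$ is the $\F_2$-sum of the contributions of the two endpoints, so cancellations happen only on edges in $E(U)$, yielding $\abs{H^\T y}\ge\delta w\abs{U}-2\abs{E(U)}$. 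Applying Lemma~\ref{lm:EML} to bound $\abs{E(U)}$ and using $\abs{U}\ge\abs{y}/r$ (since $\abs{y_v}\le r$) completes this direction; in fact this side yields a much stronger expansion constant than the lemma requires.

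The main technical obstacle is on the $H$ side. The naive bound $\abs{B}\le 2\abs{T}/(\delta w)$ from $\delta w\abs{B}\le\sum_{v\in B}\eta_v\le 2\abs{T}$ is too weak; the expander mixing lemma is needed to show that edges internal to $B$ are sparse, sharpening the bound to $\abs{B}\le\abs{T}/((\delta-\alpha w-\lambda/w)w)$. Likewise, the trivial $\abs{S}\ge 2\abs{T}/w$ is insufficient, and using $T\subseteq E(S)$ together with Lemma~\ref{lm:EML} gives the sharper $\abs{S}\ge 2\abs{T}/(\alpha w^2+\lambda)$. Verifying that $\alpha w<\delta-\lambda/w$ is exactly the threshold at which $\beta=(\delta-\alpha w-\lambda/w)/(\delta w)$ remains positive, and that the two bounds can be balanced uniformly across the allowed range of $\alpha$, is where the arithmetic of the proof comes together.
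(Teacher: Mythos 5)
Your argument for $H^\T$ is essentially the paper's proof: define the set $U$ of vertices with nonzero $y_v$, note that full row rank of $H_0$ forces each local contribution $H_0^\T y_v$ into $\cC_0^\perp\setminus\{\zm\}$ and hence to have weight at least $\delta w$, observe that two local contributions can cancel only on edges inside $E(U)$, and finish with Lemma~\ref{lm:EML} plus $\abs{U}\ge\abs{y}/r$. That direction is correct.

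The $H$ direction, however, has a genuine gap, and it is not merely a missing detail in the arithmetic. You lower-bound $\abs{Hx}$ by $\abs{S}-\abs{B}$ (touched vertices minus those whose local restriction is a nonzero codeword of $\cC_0$), then try to close by lower-bounding $\abs{S}$ and upper-bounding $\abs{B}$ separately. The trouble is that the two bounds you derive do not combine to anything nonnegative in the regime where $\lambda$ is close to $\delta w$. Your lower bound is $\abs{S}\ge 2\abs{T}/(\alpha w^2+\lambda)$; your upper bound is of the form $\abs{B}\le\abs{T}\big/\big(w(\delta-\alpha/\delta-\lambda/w)\big)$ coming from $\delta w\abs{B}\le 2\abs{E(B)}+\abs{T}$ and Lemma~\ref{lm:EML}. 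When $\lambda\to\delta w$, the denominator $w(\delta-\alpha/\delta-\lambda/w)\to 0^+$ so the $\abs{B}$ bound blows up, while $\abs{S}\ge 2\abs{T}/(\alpha w^2+\lambda)$ stays bounded by roughly $2\abs{T}/(\delta w)$. Concretely, with $\delta=0.3$, $w=10$, $\lambda=2.9$, $\alpha=0.0005$ (all admissible), one gets $\abs{S}\gtrsim 0.68\abs{T}$ but only $\abs{B}\lesssim 12\abs{T}$, so $\abs{S}-\abs{B}$ is controlled from below only by a negative number, which is vacuous, while the required $\beta\abs{T}\approx 0.0017\abs{T}$ is positive. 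The paper avoids this subtraction entirely: it defines the heavy set $S$ of vertices with at least $\delta w$ incident $X$-edges, applies Lemma~\ref{lm:EML} to bound $\abs{E(S)}$, concludes that at least $\bigl(1-\alpha/\delta^2-\lambda/(\delta w)\bigr)\abs{X}$ edges of $X$ escape $E(S)$, and then double-counts: each escaping edge touches a light vertex, and each light vertex absorbs fewer than $\delta w$ of them, giving a direct positive lower bound on the number of light touched vertices, which is a lower bound on $\abs{Hx}$. Structurally, the paper controls edges trapped in the heavy set, not the cardinality of the bad set, and that is what makes the bound survive as $\lambda\to\delta w$. Your decomposition would need a substantially sharper estimate on $\abs{T\cap\partial B}$ (not just $\abs{T\cap\partial B}\le\abs{T}$) to be salvageable, and I do not see how to obtain one.
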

\begin{proof}
Let us start with a quick remark that the~conditions ${\lambda < \delta w}$ and ${\alpha < \frac{\delta}{w}\rbr{1-\frac{\lambda}{\delta w}}}$ imply that $\frac{\delta}{w}\rbr{1-\frac{\lambda}{\delta w}} > 0$ and $\beta > 0$.
We~divide the~proof into two parts. In~the~first part we show that the~matrix~${H\in\Mat_{2rn\times wn}(\F_2)}$ is \mbox{$(\alpha, \beta)$-expanding}, while in the second part that the same holds for~$H^\T$.
    
Consider a~binary vector $x\in\F_2^{wn}$ such that $\abs{x} \le \alpha w n$. Let~$X\subseteq E(G)$ be the corresponding set of edges, where the~bits from~$x$ are equal to~$1$. We divide the set of vertices incident to some edges from $X$ into two parts: the~set~$S$ of vertices incident to \emph{at least} $\delta w$ edges from $X$; and the~set~$S'$ of vertices incident to \emph{less than} $\delta w$ edges from~$X$.  One~can easily check\footnote{Each edge from $X$ is connected to at most $2$ vertices from $S$.} that $\abs{S} \le 2 \abs{X}/\delta w$. Hence from $\abs{X} \le \alpha wn$ it also follows that $\abs{S}\le 2\alpha n/\delta$, and we can estimate the~number of edges from $X$ connected only to $S$ by~Lemma~\ref{lm:EML}:
\begin{equation*}
    \abs{E(S)}  \le \frac12\rbr{\frac{\alpha}{\delta}+\frac{\lambda}{w}}w\abs{S}\le \rbr{\frac{\alpha}{\delta^2}+\frac{\lambda}{\delta w}}\abs{X}.        
\end{equation*}
Therefore we have:
\[
      \abs{X\setminus E(S)}\ge \abs{X}-\abs{E(S)}\ge \rbr{1-\frac{\alpha}{\delta^2}-\frac{\lambda}{\delta w}} \abs{X}.
\]
For each edge from $X\setminus E(S)$ one of the two vertices it connects is outside of $S$; hence this vertex is in~$S'$.  Since $S'$ is connected to less than $\delta w$ edges from $X$, we can estimate the size of $S'$ as follows:
\[
    \abs{S'} > \frac{\abs{X\setminus E(S)}}{\delta w} \ge \frac{1}{\delta w}\rbr{1-\frac{\alpha}{\delta^2}-\frac{\lambda}{\delta w}}\abs{X}.
\]
Since $d(\cC_0) \ge \delta w$, for each $v\in S'$ the parity-checks of the code $\cC_0$ that correspond to $v$ can not be simultaneously satisfied. Therefore we have  
\begin{multline*}
  \abs{Hx} \ge \abs{S'} > \frac{1}{\delta w}\rbr{1-\frac{\alpha}{\delta^2}-\frac{\lambda}{\delta w}}\abs{X} \\
  \ge \frac{1}{\delta w}\rbr{1-\frac{\alpha w}{\delta}-\frac{\lambda}{\delta w}}\abs{X}
  = \frac{\beta}{\delta} \abs{x} \ge \beta\abs{x},    
\end{multline*}
where we used that $1/\delta \le w$ and $\delta \le 1$.
Hence we see that $\abs{x} \le \alpha w n$ implies $\abs{H x} \ge \beta \abs{x}$. Thus $H$ is $(\alpha, \beta)$-expanding, and the first part of the proof is complete.

Now let us prove that $H^\T$ is also $(\alpha, \beta)$-expanding. Hence we need to show that
for any $y\in \F_2^{2rn}$ such that $\abs{y}\le 2\alpha rn$ we have $\abs{H^\T y} \ge \beta \abs{y}$. As we already mentioned,  $H$ consists of $2n$ groups of rows $(\mathcal{R}_v)_{v\in V(G)}$, where each group $\mathcal{R}_v$ corresponds to the~$r$ parity-checks of~$\cC_0$ related to the vertex $v\in V(G)$. Thus $H^\T y= \sum_{v\in S} \rho_v$, where $\rho_v$ is a linear combination of the~rows from the group $\mathcal{R}_v$ and $S$ is the set of vertices $v\in V(G)$ such that $\mathcal{R}_v$ contains at least one row from the linear combination $H^\T y$. Since $\rho_v|_{N(v)}$ is a~linear combination of rows from~$H_0$ and $d(\cC_0^\perp) \ge \delta w$, we see\footnote{We have $\rho_v \ne 0$ since the~rows in $H_0$ are linearly independent.} that $\abs{\rho_v} \ge \delta w$ for all $v\in S$. Let us note that $\abs{\rho_v\cap\rho_{v'}}=0$ for $v\ne v'$  unless the vertices~$v,v'$ are connected by an~edge from $E(S)$, in which case we have $\abs{\rho_v\cap\rho_{v'}} \le 1$. Moreover, $\abs{\bigcap_{v\in I}\rho_v} = 0$ if $\abs{I} > 2$.  Therefore we obtain that 
\begin{multline*}
\abs{H^\T y} = \Big|\sum_{v\in S} \rho_v\Big| = \sum_{v\in S} \abs{\rho_v} - 2\!\!\!\sum_{\substack{v\ne v'\\v,v'\in S}}\abs{\rho_v\cap\rho_{v'}} \\
\ge \sum_{v\in S} \abs{\rho_v} - 2\abs{E(S)}  
\ge \delta w\abs{S} - 2\abs{E(S)}.    
\end{multline*}
Since  $\abs{S}\le \abs{y} \le 2\alpha r n$, if we apply Lemma~\ref{lm:EML} to the~set~$S$, we obtain that $\abs{E(S)} \le \frac12\rbr{\alpha r + \frac{\lambda}{w}}w\abs{S}$ and
\begin{multline*}
\abs{H^\T y} \ge \delta w\abs{S} - 2\abs{E(S)} \ge \rbr{\delta - \alpha r - \frac{\lambda}{w}}w\abs{S}\\
\ge \rbr{\delta - \alpha w - \frac{\lambda}{w}}\abs{y} = \delta w\beta \abs{y} \ge
\beta \abs{y},    
\end{multline*}
where we used that $w \ge r$, $w\abs{S} \ge r\abs{S}\ge\abs{y}$, and $\delta w \ge 1$. Hence we proved that $\abs{y}\le 2\alpha rn$ implies $\abs{H^\T y} \ge \beta \abs{y}$ and the second part is complete. 
\end{proof}

\begin{proof}[Proof of Proposition~\ref{pr:alpha-beta}]

It is known~\cite{Friedman:2003, Puder:2015} that \mbox{a~random} \mbox{$w$-regular} graph $G$ w.h.p. has $\lambda(G) < 2\sqrt{w - 1} + 1$. Thus for any~sufficiently large $n\ge n_0$ there exists\footnote{We should also mention the reference~\cite[Theorem~1.3]{Alon:2021} where an~explicit construction of such graphs is given.} \mbox{a~$w$-regular} graph~$G$ with $2n$~vertices such that ${\lambda(G) < 2\sqrt{w - 1} + 1}$. At~the~same time, it is known~\cite{Agarwal:2019} that for some positive constants $c_1,c_2$ for a random shift \mbox{$\ell$-lift} $\hat{G}$ of $G$ we have $\lambda(\hat{G}) \le c_1\lambda(G)$ with probability at least $1 - \ell\exp(-c_2 n/w^2)$. Hence if we choose a~sufficiently large\footnote{It is enough to use $\gamma = \max(\ceil{w^2/c_2}, n_0)$.}~$\gamma$ this probability is positive for all $\ell > 1$, $n \ge \gamma \log_2 \ell$; and therefore there exists a~shift $\ell$-lift $\hat{G}$ of $G$ such that  
\begin{equation}\label{eq:lift-exp}
    \lambda(\hat{G}) < c_1\!\rbr{2\sqrt{w - 1} + 1}.
\end{equation}

Now consider $\eps \in (0, 1)$, and let $\delta$ be some real number  from the~interval $(0,1/2)$ such that the following inequality holds:
\begin{equation}\label{eq:GV}
\max(1- \eps/2, \eps/2) + \eps/4 \ <\ 1 - h_2(\delta),    
\end{equation}
where $h_2(x) = -x\log_2 x - (1-x)\log_2(1-x)$ is \mbox{the~binary} \mbox{entropy} function. Let $\cC_0$ be a~code defined by a~uniformly random parity-check matrix~$H_0\in\Mat_{r\times w}(\F_2)$, where ${r = \floor{\frac{\eps}{2} w}}$. Since $1-r/w\to 1 - \eps/2$ and $r/w\to  \eps/2$ as ${w\to\infty}$, taking into account (\ref{eq:GV}),  the~Gilbert–Varshamov bound implies that for \emph{every} sufficiently large $w$ w.h.p. we have 
\begin{equation}\label{eq:code-ine}
d(\cC_0) \ge \delta w,\quad d(\cC_0^\perp) \ge \delta w,    
\end{equation}
and the~matrix $H_0$ is full rank. Indeed, let us remind that the~Gilbert–Varshamov bound can be proved~\cite{Barg&Forney:2002} using a~code defined either by a~random parity-check matrix or a~random generator matrix (i.e., the~parity-check matrix of the~dual code), and w.h.p. those matrices are full rank. Therefore by the~union bound we see that w.h.p $\cC_0$ is a~$[w,w-r]$ code that satisfy (\ref{eq:code-ine}) as $w\to\infty$.
Hence if we consider a~Tanner code~$\cT(\hat{G},\cC_0)$ with the code $\cC_0$ that satisfy (\ref{eq:code-ine}), and choose  a sufficiently large $w$ such that we also have
\[c_1\!\rbr{2\sqrt{w - 1} + 1} < \delta w,\] 
then using (\ref{eq:lift-exp}) by~Lemma~\ref{lm:alpha-beta-exp} we obtain that the~matrices $H, H^\T$ are $(\alpha, \beta)$-expanding for some positive constants $\alpha,\beta$; where $H$ is the~parity-check matrix of the~code~$\cT(\hat{G},\cC_0)$.
Since $\hat{G}$ is a~shift $\ell$-lift for $G$, according to Remark~\ref{rm:QC-tanner} we can assume that the~matrix~$H$ is a~QC matrix of lift size~$\ell$, i.e., $H = \B(A)$ for some \mbox{$w$-limited} matrix~$A\in\Mat_{m\times wn}(R_\ell)$, where $n \ge \gamma \log_2 \ell$, and  $m = 2\floor{\frac{\eps}{2} w}n\le \eps w n$. Now since the~matrices $H=\B(A)$, $H^\T = \B(A^*)$ are $(\alpha,\beta)$-expanding, we see that $A$ and $A^*$ are $(\alpha,\beta)$-expanding. Finally, since the antipode map $u\mapsto \bar{u}$ is an~automorphism of $R_\ell$, and $|A^\T u|  = |\overline{A^\T u}| = |A^* \bar{u}|$, we also obtain that the matrix~$A^\T$ is $(\alpha,\beta)$-expanding, and the~proof is complete.
\end{proof}


\subsection{Asymptotically good QC LDPC codes with large lift sizes}
Proposition~\ref{pr:alpha-beta} can be used to construct asymptotically good families of classical QC~LDPC codes of very large lift sizes~$\ell$. Indeed, if we put $n = \ceil{\gamma \ln \ell}$ and $\eps = 1 - R$ in~Proposition~\ref{pr:alpha-beta}, then we obtain the code~$\cC(A)$ of rate at~least $R$ and distance at~least $\alpha N$, where $N$ is the~code length. Moreover since $\log N \sim \log \ell$, and $n = \Theta(\log N)$ as~$N\to\infty$, we obtain the~following corollary.  

\begin{corollary}\label{cor:QC-dist}
    For any $R < 1$ there exists a~family of classical QC~LDPC codes of length $N$ and rate at least $R$ with distance~$\Omega(N)$ and lift size~$\Omega(N/\log N)$ as~$N\to\infty$.
\end{corollary}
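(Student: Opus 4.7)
The plan is to apply Proposition~\ref{pr:alpha-beta} directly with a well-chosen $\eps$ and $n$, and then simply read off the parameters of the resulting code~$\cC(A)$.

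First, set $\eps = 1 - R \in (0,1)$ and let $\alpha, \beta, \gamma, w$ be the constants provided by Proposition~\ref{pr:alpha-beta}. For each lift size~$\ell > 1$, choose $n = \lceil \gamma \ln \ell \rceil$, so that the hypothesis $n \ge \gamma \ln \ell$ is satisfied. The proposition then yields a $w$-limited QC matrix $A \in \Mat_{m \times wn}(R_\ell)$ with $m \le \eps w n$ such that $A$ (and $A^\T$, which we do not need here) is $(\alpha, \beta)$-expanding. I would take $\cC(A)$ as the code in the family, indexed by growing~$\ell$.

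Next I would compute the parameters. The binary block matrix $\B(A)$ has dimensions $\ell m \times \ell wn$, so the code length is $N = \ell wn$ and the number of parity-checks is at most $\ell m \le \eps\ell wn = \eps N$; therefore the rate of $\cC(A)$ is at least $1 - \eps = R$. Since $A$ is $(\alpha,\beta)$-expanding, any nonzero codeword $c$ with $|c| \le \alpha \ell wn = \alpha N$ would force $|Ac| \ge \beta |c| > 0$, contradicting $Ac = \zm$; hence $d(\cC(A)) > \alpha N = \Omega(N)$.

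Finally, for the lift size relation, note that with $n = \lceil \gamma \ln \ell \rceil$ we have $N = \ell w n = \Theta(\ell \log \ell)$, which gives $\log N = \Theta(\log \ell)$ and hence $\ell = \Theta(N/\log N) = \Omega(N/\log N)$ as $N \to \infty$. There is no genuine obstacle here; the whole argument is essentially bookkeeping on top of Proposition~\ref{pr:alpha-beta}. The one mild subtlety worth double-checking is that the $w$-limitedness of $\B(A)$ (inherited from the proposition) certifies the family as LDPC uniformly in~$\ell$, since $w$ depends only on~$\eps$, i.e.\ only on~$R$.
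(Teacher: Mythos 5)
Your proposal is correct and follows essentially the same path as the paper: set $\eps = 1-R$, take $n = \lceil\gamma\ln\ell\rceil$, invoke Proposition~\ref{pr:alpha-beta}, and read off rate $\ge R$, distance $>\alpha N$, and lift size $\ell = \Theta(N/\log N)$ from the resulting $(\alpha,\beta)$-expanding $w$-limited QC matrix. The only difference is that you spell out the bookkeeping (that $N = \ell w n$, $\log N \sim \log\ell$, and that $w$ depends only on $R$) which the paper leaves implicit; the argument is identical.
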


We will see below that the lift size $\Omega(N/\log N)$ from Corollary~\ref{cor:QC-dist} is in some sense the~best possible for QC LDPC codes and even for quasi-abelian LDPC codes with linear minimum distance. More specifically, we will show using the~results from~\cite{Smarandache:2012} that: for any~family of quasi-abelian LDPC~codes of distance $\Omega(N)$, defined by $m\times n$ parity-check matrices with $m < n$ over commutative group algebras, the~lift size grows at most like $O(N/\log N)$ as the~code length~$N\to\infty$. 

Before we prove this we need some definitions and notations from~\cite{Smarandache:2012}. If $A$ is an $m\times n$ matrix and $I\subseteq [n]$, then we denote by $A_{I}$ the $m\times\abs{I}$ submatrix of~$A$ that contains only  the columns of $A$ with indexes from the~set~$I$. If $X$ is a~finite set and $f$ is a real-valued function on $X$, we denote by $\underset{x\in X}{\mathrm{min}^*} f(x)$ the \emph{minimum nonzero value} of $f(x)$ on~$X$; if there are no nonzero values then $\min^*$ gives $+\infty$. Let us remind that the~\emph{permanent} of an~integer matrix~$A = (a_{i j})_{n\times n}$ denoted by $\perm A$ is given by the following formula:
\[
    \perm A = \sum_{\pi\in \mathbf{S}_n} a_{1, \pi(1)} \dots a_{n, \pi(n)}. 
\]
Thus $\perm A$ is essentially $\det A$ if we ignore the signatures of the permutations $\pi\in \mathbf{S}_n$. If all elements from $A$ are non-negative integers then we have the following trivial upper bound:
\begin{equation}\label{eq:perm-triv}
    \perm A \le \prod_{j\in [n]} (a_{1 j} +\dots + a_{n j}).
\end{equation}

If we have a~matrix $H = (h_{i j})_{m\times n}$ over $\F_2G$, where $G$ is some abelian group of size $\ell$; then we can consider its~weight matrix $W = (w_{i j})_{m\times n}$, where $w_{i j} = \abs{h_{i j}}$ is the~row and the~column weight of each $\ell\times\ell$ block in $H$ (considered as a binary block matrix), $i\in [m], j\in [n]$. If we fix the~weight matrix $W$ and consider matrices $H$ with  $\ell\to\infty$, it is natural to expect that $d(\cC(H)) \to\infty$. However it turns out~\cite[Theorem~7]{Smarandache:2012} that when $m < n$ there is an upper bound\footnote{In~\cite{Smarandache:2012} the bound is proved only for QC~matrices, but the way it is proved in~\cite{Butler&Siegel:2013} can be easily extended to matrices over abelian group algebras.} on the minimum distance $d$ of the code~$\cC(H)$ that depends only the weight matrix $W$, which implies that $d$ doesn't grow with the lift size~$\ell$. The upper bound is as follows:
\begin{equation}\label{eq:QC-perm}
    d\le \underset{\substack{S\subseteq [n]\\ \abs{S}=m+1}}{\mathrm{min}^*} \sum_{i\in S} \perm W_{S\setminus i}.
\end{equation}
\begin{remark}
We should emphasize that if $m=n$, then this bound does not work anymore, and the~distance~$d$ can grow linearly with the~lift size~$\ell$. For example, if $H  \in\Mat_{n}(R_\ell)$ is given by the formula:
\[
\mat{H} =
\begin{pmatrix}
1 & 1 & 0 & \ldots & 0\\
0 & 1 & 1 & \ldots & 0\\
\hdotsfor{5}\\
0 & 0 & \ldots & 1 & 1\\
x & 0 & \ldots & 0 & 1\\
\end{pmatrix},
\]
then $H$ is a~$2$-limited matrix, but $d=\ell n$.
\end{remark}

Let us now return to the case when $m < n$.
If the matrix $H$ is $w$-limited and\footnote{The $w$-limited matrices with $w\le 1$ define trivial codes with minimum distance $1$.} $w\ge 2$, then the sum of elements in any row and column of the weight matrix~$W$ is bounded above by $w$. Hence the same holds for every its submatrix $W_{S\setminus i}$ from bound (\ref{eq:QC-perm}). Thus using~(\ref{eq:perm-triv}) we obtain from~(\ref{eq:QC-perm}) that $d \le (m+1)w^m$. Now suppose that the~minimum distance $d$ of the~code~$\cC(H)$ is~$\Omega(N)$ as $N\to\infty$, i.e., $d \ge \alpha N$ for some fixed $\alpha > 0$. In~this case we have: 
\[
\alpha N \le (m+1)w^m \le 2^m w^m\le w^{2m} < w^{2n}.
\]
Hence $\alpha N < w^{2n}$, $n = \Omega(\log N)$, and finally we obtain that $\ell = N/n$ is bounded above by~$O(N/\log N)$ as $N\to\infty$.

\section{LP codes with almost linear distance}\label{sc:lp-linear-dist}

This section contains the~proof of our main result. We consider the~special case of QC LP codes, studied in Section~\ref{sc:GHP}, and prove Proposition~\ref{pr:main}, which is the~main technical tool to establish the~lower bound on the~code distance. Finally, we combine this lower bound with our results on expanding matrices from Section~\ref{sc:expanders} to show the~existence of QLDPC codes with almost linear distance.    

\begin{proposition}\label{pr:main}
Let $A\in\Mat_{m\times n}(R_\ell)$ be a~$w$-limited QC matrix such that $A$ and $A^\T$ are $(\alpha, \beta)$-expanding. Consider a~quantum code~$\cQ = \LP(A,1+x)$. There exists a~constant $\gamma > 0$, that depends only on $\alpha$, $\beta$, and $w$, such that\textup{:}
\begin{enumerate}
    \item $d(\cQ) \ge \gamma\ell$\textup{;}
    \item If $\dim \cC(A^\T(1)) = 0$ then $\dZ(\cQ) \ge \gamma \ell n$\textup{;}
    \item If $\dim \cC(A(1)) = 0$ then $\dX(\cQ) \ge \gamma \ell m$.
\end{enumerate}
\end{proposition}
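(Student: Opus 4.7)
The plan is to classify non-degenerate codewords via the ring evaluation $\phi_1\colon R_\ell\to\F_2$, $r\mapsto r(1)$, and then use the expansion of $A$ and $A^\T$ to bound their weight case by case.

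Every Z-codeword $(u,v)\in R_\ell^n\oplus R_\ell^m$ of $\cQ=\LP(A,1+x)$ satisfies $Au=(1+x)v$, so applying $\phi_1$ componentwise yields $u(1)\in\cC(A(1))$. Using the equivalence $(u,v)\sim((1+x)t,At)$ and the identity $\ker((1+x)\cdot)=\F_2\cdot g$ with $g=1+x+\cdots+x^{\ell-1}$, I would split non-degenerate Z-codewords into two classes: \emph{type~I}, with invariant $u(1)\ne 0$ in $\cC(A(1))$; and \emph{type~II}, with $u(1)=0$ and invariant $[w]\ne 0$ in $\F_2^m/\im A(1)\cong\cC(A^\T(1))$, where $w$ is defined by $v=At+gw$ after writing $u=(1+x)t$. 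A symmetric classification holds for X-codewords, with invariants governed by $v(1)\in\cC(A^\T(1))$. Moreover, the expansion of $\B(A)$ restricted to ``constant-in-block'' vectors $\mathbf 1_\ell\otimes s$ (which $\B(A)$ sends to $\mathbf 1_\ell\otimes A(1)s$) shows that $A(1)$, and symmetrically $A^\T(1)=A(1)^\T$, is $(\alpha,\beta)$-expanding as a binary matrix; hence every nonzero codeword of $\cC(A(1))$ (resp.\ $\cC(A^\T(1))$) has Hamming weight exceeding $\alpha n$ (resp.\ $\alpha m$).

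For part~(1), the bound $\dZ\ge\gamma\ell$ would follow by case analysis. A type~II representative is equivalent to $(0,gw')$ with $|w'|\ge 1$, hence weight $\ge\ell$; for any other representative $((1+x)t,gw+At)$ with $u=(1+x)t\ne 0$, the expansion of $A$ forces $|Au|\ge\beta|u|$, and combining with $(1+x)v=Au$ gives $|v|\ge\beta|u|/2$, from which $|u|+|v|=\Omega(\ell)$ follows after splitting on the block structure of $u$. A type~I codeword has a representative with $u_i=\sigma_i\in\F_2$ of weight $|\sigma|>\alpha n$; since at least one block of $A\sigma$ must be nonzero (again by expansion), the solution $v_k$ of $(1+x)v_k=(A\sigma)_k$ in that block contributes weight $\Omega(\ell)$ to $|v|$, provided the nonzero bits of $(A\sigma)_k$ are sufficiently spread within the block. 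The bound $\dX\ge\gamma\ell$ is symmetric.

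For parts~(2) and~(3), the additional hypothesis eliminates one of the two codeword types, leaving only type~I Z-codewords (resp.\ type~II X-codewords), and boosting the bound to $\gamma\ell n$ (resp.\ $\gamma\ell m$) requires that both $|u|$ and $|v|$ grow linearly in $\ell n$ (resp.\ $\ell m$). The main obstacle here is that a naive flat representative with $u_i\in\F_2$ has $|u|\le n$, so the extra factor of $\ell$ must come from $|v|$. I would argue, using the block-level expansion of $\B(A)$ together with the surjectivity of $A(1)$ (implied by $\dim\cC(A^\T(1))=0$), that the nonzero bits of $A\sigma$ are sufficiently spread \emph{within each block} that a constant fraction of the nonzero blocks $(A\sigma)_k$ have $v_k$ of weight close to $\ell/2$, and that no equivalent representative can reduce this simultaneously across blocks. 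A parallel argument, swapping the roles of $A$ and $A^\T$, handles part~(3). Quantifying this ``block-spread'' property from the expansion constants $\alpha,\beta,w$ alone is the delicate step.
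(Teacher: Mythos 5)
Your classification into type I / type II codewords via the evaluation $\phi_1\colon u\mapsto u(1)$ matches Lemma~\ref{lm:LP-simple-cws}, and your observation that $A(1)$ and $A^\T(1)$ inherit $(\alpha,\beta)$-expansion from $\B(A)$ acting on block-constant vectors is correct (the paper only extracts $d(\cC(A(1)))\ge\alpha n$, so you actually have a slightly sharper fact). But the weight-bounding step, which is where the real content of Proposition~\ref{pr:main} lies, has genuine gaps in both branches.

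For type~II, the chain ``$|Au|\ge\beta|u|$, hence $|v|\ge\beta|u|/2$, hence $|u|+|v|=\Omega(\ell)$'' does not close: the middle inequality gives nothing when $|u|$ is small, and there is no obstruction to a representative having $|u|=|(1+x)h|$ tiny while $h$ itself is large (think $h_i$ nearly block-constant). The paper handles exactly this with Lemma~\ref{lm:autcorr}: writing $u=(1+x)h$ and $v=\ones{\ell}v'+Ah$ with $|h_i|\le\ell/2$, it first forces $|h|\ge\Omega(\ell/w)$ from $w$-limitedness of $A$ (since $|Ah|\ge|\ones{\ell}v'|-|v|$ is close to $\ell$), then uses the autocorrelation inequality to produce some $t_0$ with $|(1+x^{t_0})h|\ge|h|$, and finally a minimality argument to keep $|(1+x^{t_0})h|$ small enough that expansion applies to get a contradiction from $|A(1+x^{t_0})h|=|(1+x^{t_0})v+(1+x^{t_0})\ones{\ell}v'|\le 2|v|$. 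That lemma, and the careful choice of $t_0$, is the missing ingredient.

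For type~I, replacing $u$ by a flat representative $\sigma\in\F_2^n$ is not allowed at this point in the argument: $\dZ$ is a minimum over the set $\CZ\setminus\CX^\perp$, so you must lower-bound $|u|+|v|$ for the \emph{given} codeword $(u,v)$, not for some equivalent one of possibly different weight. And even if you did argue about a flat representative, the ``block-spread'' property you invoke for $(A\sigma)_k$ fails in general ($a_{ij}$ are sparse, so $(A\sigma)_k$ may well be something like $1+x$ with a weight-one preimage under multiplication by $1+x$). The paper instead uses cumulative sums $u^{(t)}=\ones{t}u$ directly on $(u,v)$: since $|u^{(1)}|=|u|$ and $|u^{(\ell)}|=\ell|u(1)|\ge\alpha\ell n$, one can pick the minimal $t_0$ with $|u^{(t_0+1)}|\ge\alpha\ell n$, show $\alpha\ell n/2\le|u^{(t_0)}|<\alpha\ell n$, and then exploit $Au^{(t_0)}=(1+x^{t_0})v$ together with expansion to get $|v|\ge\frac{\alpha\beta}{4}\ell n$. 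This same argument already yields the strengthened bound in part~(2) (and, transposed, part~(3)) with no extra ``block-spread'' hypothesis, which is why the gap you flagged as ``the delicate step'' never arises in the paper's proof: the route through flat representatives and within-block spreading is not the one that works here.
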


In order to prove Proposition~\ref{pr:main} we need two simple lemmas. Below by $\ones{\ell}$ we denote  the~\emph{all~one polynomial}~$\sum_{i=0}^{\ell-1} x^i$.
\begin{lemma}\label{lm:LP-simple-cws}
    Consider a quantum code~$\cQ = \LP(A, 1+x)$, where $A\!\in\!\Mat_{m\times n}(R_\ell)$, and let ${B=A(1)\!\in\!\Mat_{m\times n}(\F_2)}$. Then every non-degenerate codeword ${[u, v] \in\CZ(\cQ)\setminus\CX^\perp(\cQ)}$  satisfies one of the following two conditions\textup{:}
\begin{enumerate}
    \item  $u(1)$ is a non-zero codeword from $\cC(B)\subseteq \F_2^n$\textup{;}
    \item $[u,v]\sim [\zm, \ones{\ell}v']$ for some ${v'\in \F_2^m\setminus \im B}$\textup{;} and we have $u = (1+x)h$, $v = \ones{\ell}v' + Ah$, where ${h\in R_\ell^n}$, $\abs{h_i} \le \ell/2$, $i\in[n]$. 
\end{enumerate}    
\end{lemma}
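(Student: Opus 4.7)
The plan is to trace the Z-codeword constraint $Au + (1+x)v = \zm$ in $R_\ell^m$ (coming from $\HX = [A, (1+x)I_m]$) through the evaluation map $r\mapsto r(1)$ from $R_\ell$ to $\F_2$. Applying this evaluation componentwise to the constraint kills the second term and gives $B\,u(1) = \zm$, so $u(1) \in \cC(B) \subseteq \F_2^n$. If $u(1) \ne \zm$ we are in case~(1) and done; otherwise every $u_i$ has even Hamming weight, hence is divisible by $1+x$ in $R_\ell$, and we write $u = (1+x)h$ for some $h \in R_\ell^n$.

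The representative $h$ is unique only modulo the annihilator of $1+x$ in $R_\ell$, which is $\F_2\cdot\ones{\ell}$ in each component (since $(1+x)\ones{\ell} = \zm$ and this kernel is one-dimensional). As adding $\ones{\ell}$ to a component flips all of its $\ell$ bits, we may choose a representative with $|h_i| \le \ell/2$ for every $i$. Plugging $u = (1+x)h$ back into the constraint gives $(1+x)(Ah + v) = \zm$, so each component of $Ah + v$ lies in $\F_2\cdot\ones{\ell}$, producing a vector $v' \in \F_2^m$ with $v = Ah + \ones{\ell}v'$.

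It remains to show $[u,v] \sim [\zm, \ones{\ell}v']$ and $v' \notin \im B$. Both follow from the column-form description
\[
\CX^\perp = \{[(1+x)c,\,Ac] : c \in R_\ell^n\},
\]
whose derivation I expect to be the main subtlety: one uses that the $\F_2$-row space of $\B(\HZ)$ equals the column space of $\B(\HZ)^\T = \B(\HZ^*)$, and then unpacks $\HZ^* = \bigl[\begin{smallmatrix}(1+x)I_n \\ A\end{smallmatrix}\bigr]$, so a general element has the form $\B(\HZ^*)\bv(c) = \bv([(1+x)c,\,Ac])$. Given this description, taking $c = h$ yields $[(1+x)h, Ah] = [u,v] - [\zm, \ones{\ell}v'] \in \CX^\perp$, establishing the equivalence. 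For the non-containment, if $v' = B\mu$ for some $\mu \in \F_2^n$, then the identity $r\,\ones{\ell} = r(1)\,\ones{\ell}$ in $R_\ell$ gives $A(\mu\,\ones{\ell}) = \ones{\ell}(B\mu) = \ones{\ell}v'$; so $c = h + \mu\,\ones{\ell}$ produces $[(1+x)c, Ac] = [(1+x)h,\, Ah + \ones{\ell}v'] = [u,v]$, forcing $[u,v]\in\CX^\perp$ and contradicting non-degeneracy.
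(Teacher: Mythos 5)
Your proof is correct and follows essentially the same route as the paper: evaluate the $\CZ$-constraint at $x=1$ to land in case (1) or conclude $u(1)=\zm$; factor $1+x$ out of $u$, normalize each $h_i$ by adding $\ones{\ell}$ to get $\abs{h_i}\le\ell/2$; deduce $v=Ah+\ones{\ell}v'$; and use the explicit parametrization $\CX^\perp=\{[(1+x)c,\,Ac]:c\in R_\ell^n\}$ both to establish $[u,v]\sim[\zm,\ones{\ell}v']$ and to rule out $v'\in\im B$ via the identity $A(\mu\ones{\ell})=\ones{\ell}B\mu$. The only cosmetic difference is that the paper first records the three membership/equivalence criteria as a display and then reads off $(1+x)r=0$ from $[\zm,r]\in\CZ$, whereas you substitute $u=(1+x)h$ directly into the constraint; the underlying argument is identical.
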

\begin{proof}
First, we describe the~equivalence classes of codewords in~$\CZ=\CZ(Q)$. From the~definition of $\cQ$ it easily follows that 
\begin{equation}\label{eq:lp-simple-cws}
\begin{split}
[u, v]\in\CZ  &\iff Au = (1+x)v; \\   
[u, v]\in\CX^\perp  &\iff \exists h\in R_\ell^n\colon u = (1+x)h, v = A h;    
\end{split}
\end{equation}
and $[u, v]\in\CZ$ is equivalent to $[u',v']\in\CZ$ iff there exists $h\in R_\ell^n$ such that 
\begin{equation}\label{eq:lp-simple-degen}
     u' - u = (1+x)h, \quad v' - v = Ah.
\end{equation}
Hence if $[u, v]\in \CZ(\cQ)$ then $A(1)u(1) = \zm$, and we have $u(1)\in \cC(B)$. Therefore when $u(1)\ne \zm$ we obtain that $[u,v]$ satisfies the~first condition of the~lemma. 

Now let us suppose that $u(1) = \zm$. In this case we see that ${u = (1+x)h}$ for some $h\in R_\ell^n$. Let us note that we can always choose $h$ such that $\abs{h_i} \le \ell/2$, $i\in[n]$. Indeed, if it does not have this property, then we can replace it with $h'$ such that $(1+x)h' = (1+x)h$, defined by
\[
h'_i = 
\begin{cases}
h_i, & \text{if } \abs{h_i} \le \ell/2;\\
h_i + \ones{\ell}, & \text{if } \abs{h_i} > \ell/2. 
\end{cases}
\]
Hence we can assume that we have $h$ with the desired property, and from (\ref{eq:lp-simple-degen}) it follows that $[u,v] \sim [\zm, r]$, where $r = v + Ah$. Since $[\zm, r]$ is a non-degenerate codeword from~$\CZ(\cQ)$, we see that $r\ne \zm$, and $(1+x)r = \zm$. Thus we have $r = \ones{\ell}v'$, where $v'\in\F_2^m$, and obtain that:
\[
v = r + Ah = \ones{\ell}v' + Ah.
\]
We claim that $v'\not\in \im B$. Indeed, otherwise $v' = Bs$ for some $s\in\F_2^n$, and it is easy to see that  $r = A \ones{\ell} s$ in this case. Hence we have ${[\zm,r]\sim [\zm,\zm]}$, and obtain~a contradiction with the fact that $[\zm,r]$ is a~non-degenerate codeword. This proves that $v'\not\in \im B$, and $[u,v]$ satisfies the~second condition of the~lemma.
\end{proof}

\begin{lemma}\label{lm:autcorr}
    For any vector~$a\in R_{\ell}^n$, where $\abs{a_i} \le \ell/2$,  $i\in [n]$, there exists $t$ such that $\abs{(1 + x^t)a} \ge \abs{a}$.
\end{lemma}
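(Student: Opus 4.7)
The plan is to reduce the statement to a simple averaging argument over the nonzero shifts. First I would unpack the quantity $\abs{(1+x^t)a_i}$ componentwise: since $x^t a_i$ is the cyclic right shift of $a_i$ by $t$ positions, the $\F_2$-sum $(1+x^t)a_i = a_i + x^t a_i$ has weight
\[
\abs{(1+x^t)a_i} \;=\; 2\abs{a_i} - 2\,\abs{a_i \cap x^t a_i}.
\]
Summing over $i\in[n]$, the inequality $\abs{(1+x^t)a}\ge\abs{a}$ is equivalent to
\[
\Phi(t) \;:=\; \sum_{i=1}^{n} \abs{a_i \cap x^t a_i} \;\le\; \tfrac{1}{2}\abs{a}.
\]
So the task becomes exhibiting a shift $t\in\{1,\dots,\ell-1\}$ (excluding $t=0$, which gives $\Phi(0)=\abs{a}$) for which $\Phi(t)\le \tfrac12 \abs{a}$.

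Next I would average $\Phi(t)$ over $t=1,\dots,\ell-1$. The key identity is the standard autocorrelation relation: interpreting $a_i$ by its support $S_i\subseteq \Z/\ell\Z$, we have $\abs{a_i \cap x^t a_i} = \#\{(s,s')\in S_i\times S_i \mid s-s' \equiv t \pmod \ell\}$, so
\[
\sum_{t=0}^{\ell-1}\abs{a_i \cap x^t a_i} \;=\; \abs{S_i}^2 \;=\; \abs{a_i}^2 .
\]
Subtracting the $t=0$ term $\abs{a_i}$ and summing over $i$ gives
\[
\sum_{t=1}^{\ell-1}\Phi(t) \;=\; \sum_{i=1}^{n}\bigl(\abs{a_i}^2 - \abs{a_i}\bigr).
\]

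The final step is to apply the hypothesis $\abs{a_i}\le \ell/2$ to bound $\abs{a_i}^2 \le (\ell/2)\abs{a_i}$, so the right-hand side is at most $(\ell/2-1)\abs{a}$. Dividing by $\ell-1$ yields the average
\[
\frac{1}{\ell-1}\sum_{t=1}^{\ell-1}\Phi(t) \;\le\; \frac{\ell-2}{2(\ell-1)}\abs{a} \;<\; \tfrac12\abs{a}.
\]
Hence some $t\in\{1,\dots,\ell-1\}$ attains $\Phi(t)< \tfrac12\abs{a}$, which by the reformulation in the first paragraph gives $\abs{(1+x^t)a}>\abs{a}$ (the case $a=\zm$ is trivial, and when $\ell=1$ the statement is vacuous since $R_1=\F_2$). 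There is no real obstacle here: the bound $\abs{a_i}\le\ell/2$ is used exactly to kill the $\abs{a_i}^2$ term in the autocorrelation average, and the strict inequality in the conclusion is a small bonus beyond what the lemma asks.
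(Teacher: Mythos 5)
Your proposal is correct and uses essentially the same argument as the paper: the weight decomposition $\abs{(1+x^t)a}=2(\abs{a}-\abs{a\cap x^t a})$, the autocorrelation identity $\sum_{t=0}^{\ell-1}\abs{a_i\cap x^t a_i}=\abs{a_i}^2$, the hypothesis $\abs{a_i}\le\ell/2$, and an averaging/pigeonhole step. The only cosmetic difference is that you exclude the $t=0$ term from the average (getting a strict inequality and a nonzero shift), while the paper averages over all $t\in\{0,\dots,\ell-1\}$ and lets the zero term be absorbed.
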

\begin{proof}
Since $|a| = |x^t a|$ for any $t$, we have 
\[|(1+x^t) a|=|a|+|x^t a|-2|a\cap x^t a| = 2(|a|-|a\cap x^t a|).\]
It is not hard to see that:
\[
\sum_{i=0}^{\ell-1}|a\cap x^i a|=\sum_{j=1}^{n}\sum_{i=0}^{\ell-1}|a_j\cap x^i a_j| = \sum_{j=1}^n |a_j|^2.
\]
Since $|a_i| \le \ell/2$,  $i\in [n]$, we obtain
\[
\sum_{i=0}^{\ell-1}|a\cap x^i a|=\sum_{j=1}^n |a_j|^2\le \sum_{j=1}^n |a_j|\frac{\ell}{2}=\frac{1}{2}|a|\ell.
\]
Therefore we have:
\[\sum_{t=0}^{\ell-1}\!|(1+x^t)a|=2\!\sum_{i=0}^{\ell-1}\!|a|-2\!\sum_{t=0}^{\ell-1}\!|a\cap x^t a| \ge 2|a|\ell-|a|\ell=|a|\ell,\]
and there should exists $t$ such that $\abs{(1 + x^t)a} \ge \abs{a}$.
\end{proof}

\begin{proof}[Proof of Proposition~\ref{pr:main}]
Since $A$ and $A^\T$ are $(\alpha,\beta)$-expanding matrices, we see that 
\[
d(\cC(A)) \ge \alpha \ell n, \quad d(\cC(A^\T)) \ge \alpha \ell m.
\]
Let $B = A(1)\in \Mat_{m\times n}(\F_2)$ be the base matrix for the QC matrix~$A$. It is clear that if $c\in \cC(B)$ then $\ones{\ell} c\in \cC(A)$. Hence we obtain that $d(\cC(A)) \le \ell d(\cC(B))$ and by the same argument $d(\cC(A^\T)) \le \ell d(\cC(B^\T))$. Thus we have
\[
d(\cC(B)) \ge \alpha n, \quad d(\cC(B^\T)) \ge \alpha m.
\]
Consider a~non-degenerate codeword~$c=\![u, v]\!\in\! \CZ(\!\cQ)\!\setminus\!\CX^\perp\!(\!\cQ)$. From Lemma~\ref{lm:LP-simple-cws} it follows that $u(1)\in \cC(B)$, and we have only two cases: 
\begin{enumerate}
    \item $u(1) \ne \zm$, and thus $\abs{u(1)} \ge \alpha n$, since $u(1)\in \cC(B)$;
    \item $u(1) = \zm$, and thus $c\sim [0,\ones{\ell} v']$ for some $v'\in \F_2^m\setminus\{\zm\}$.
\end{enumerate} 

Let us consider each case separately.

\textbf{Case 1.} In this case we have $\abs{u(1)} \ge \alpha n$. Let us show that $\abs{c} = \abs{u} + \abs{v} \ge \gamma_1\ell n$, where $\gamma_1 = \min(\alpha/2, \alpha\beta/4)$.  If~$\abs{u} > \alpha \ell n/2$ then we are done. Now suppose we have $\abs{u} \le \alpha \ell n/2$. We claim that $\abs{v}\ge \alpha\beta \ell n/4$. Indeed, consider  
\[u^{(t)} = \ones{t}u, \quad s^{(t)} = A u^{(t)},\]
where $\ones{t} = \sum_{i=0}^{t-1}x^i$, $t\in [\ell]$. Since for any $t\in [\ell]$ we have $(1+x)\ones{t} = 1 + x^t$,   it follows that:
\[
s^{(t)} = A u\ones{t} = (1+x)v\ones{t} = (1+x^t)v, 
\]
where we use that $Au = (1+x)v$ by (\ref{eq:lp-simple-cws}).
Besides, we see that $\abs{u^{(1)}} = \abs{u} \le \alpha \ell n/2$ and $\abs{u^{(\ell)}} = \abs{\ones{\ell} \cdot u(1)} \ge \alpha \ell n$. Hence we can consider the minimal $t_0$ such that $\abs{u^{(t_0 + 1)}} \ge \alpha \ell n$. Since $u^{(t_0 + 1)} = u^{(t_0)} + x^{t_0 + 1}u$ and $\abs{ x^{t_0 + 1}u} =  \abs{u}  \le \alpha \ell n/2$,  we obtain 
\[\abs{u^{(t_0)}} \ge \abs{u^{(t_0 + 1)}} - \abs{u} \ge \alpha \ell n/2.\]
Finally, using that $s^{(t_0)} = (1 + x^{t_0})v$, $\alpha \ell n/2 \le \abs{u^{(t_0)}} <  \alpha \ell n$, and the~fact that $A$ is $(\alpha,\beta)$-expanding, we have:
\[ \abs{v} \ge  \frac12 \abs{s^{(t_0)}} = \frac12 \abs{A u^{(t_0)}} \ge \frac{\beta}{2}\abs{u^{(t_0)}}\ge \frac{\alpha\beta}{4} \ell n,\]
and the claim is proved. Therefore in all cases we see that $\abs{c} \ge \gamma_1\ell n$.


\textbf{Case 2.} 
In this case $u = (1+x)h$, $v = \ones{\ell}v' + Ah$; where ${v'\in \F_2^m\setminus \{\zm\}}$, ${h\in R_\ell^n}$, and $\abs{h_i} \le \ell/2$, $i\in[n]$. We~show that either ${\abs{u} \ge \gamma_2 \ell}$ or ${\abs{v} \ge \gamma_2 \ell}$, where $\gamma_2 = \frac{\alpha}{4w}\min(\beta, 1)$.  Assume the~converse, then $\abs{u} < \gamma_2\ell$ and $\abs{v} < \gamma_2\ell$. Since $Ah = v + \ones{\ell}v'$, for every $t\in [\ell]$ we get:
\[
|(1 + x^t)Ah| = |v + \ones{\ell}v' + x^t v  + x^t\ones{\ell}v'| = |v + x^t v| < 2\gamma_2\ell,
\]
where we use that $x^t \ones{\ell} = \ones{\ell}$. Further, since $A$~is~\mbox{$w$-limited}, $\abs{\ones{\ell}v'}\ge \ell$, $\abs{v} \le \gamma_2\ell\le \ell/2$, and $\alpha < 1$, we obtain: 
\[
\abs{h} \ge \frac{\abs{Ah}}{w} = \frac{\abs{v + \ones{\ell}v'}}{w}
\ge \frac{\abs{\ones{\ell}v'} - \abs{v}}{w} \ge \frac{\ell}{2w} \ge \frac{\alpha\ell}{2w}.
\]
Moreover,  if we consider $w_t = \abs{(1 + x^t)h}$, then by Lemma~\ref{lm:autcorr} there exists $t$ such that $w_t \ge \frac{\alpha\ell}{2w}$. Let us denote by $t_0$ the~smallest such~$t$.  Since $1+x^{t_0} = (1 + x) + x(1 + x^{t_0 - 1})$, $\abs{(1+x)h} = \abs{u} \le \gamma_2\ell$, and $\abs{x(1+x^{t_0 - 1})h} = w_{t_0 - 1} < \frac{\alpha\ell}{2w}$, we see that
\begin{multline*}
\abs{(1+x^{t_0})h} \le  \abs{(1+x)h} + \abs{x(1+x^{t_0 - 1})h} \\
<   \rbr{\gamma_2 + \frac{\alpha}{2w}}\ell 
\le \rbr{\frac{\alpha}{4w} + \frac{\alpha}{2w}}\ell 
= \frac{3}{4w}\alpha\ell < \alpha \ell n.  
\end{multline*}
Therefore, if we recall that the~matrix~$A$ is $(\alpha,\beta)$-expending, then from $\abs{(1+x^{t_0})h} <  \alpha \ell n$ it follows that 
\[
\abs{A(1+x^{t_0})h} \ge \beta w_{t_0} \ge \beta \frac{\alpha\ell}{2w} \ge 2\gamma_2\ell.
\]
However, we showed earlier that $\abs{A(1+x^{t})h} < 2\gamma_2\ell$ for every $t\in [\ell]$. Hence we have a~contradiction and obtain that in the second case $\abs{c} = \abs{u} + \abs{v} \ge \gamma_2\ell$.

Now, in order to finish the proof of Proposition~\ref{pr:main}, we need to set $\gamma = \min(\gamma_1,\gamma_2)$ and notice that $\dZ(\cQ) \ge \gamma \ell$ in the~both considered cases. Besides, if  $\dim \cC(A^\T(1)) = 0$ then $\im A(1) = \cC^\perp(A^\T(1)) = \F_2^m$, and by Lemma~(\ref{lm:LP-simple-cws}) we do not have case~2. Thus in this situation we obtain a~better lower bound ${\dZ(\cQ) \ge \gamma \ell n}$. 
At the same time, from (\ref{eq:LP-dual-dim}) it follows that $\dX(\cQ) = \dZ(\LP(A^\T, 1+x))$. Therefore since $A^\T$ is also $(\alpha,\beta)$-expanding,  we see, using exactly the same arguments as before, that $\dX(\cQ) \ge \gamma \ell$, and $\dX(\cQ) \ge \gamma \ell m$ in the~case when $\dim \cC(A(1)) = 0$. This completes the~proof.
\end{proof}

Now we are ready to prove our main result.

\begin{proof}[Proof of Theorem~\ref{th:main1}]
By Proposition~\ref{pr:alpha-beta} for every $\ell > 1$  there exists a~$w$-limited matrix $A\in \Mat_{m\times wn}(R_\ell)$, $n = \ceil{\gamma \ln\ell}$, $m\le \frac12 w n$, such that $A$ and $A^\T$ are $(\alpha, \beta)$-expanding, where $\alpha$, $\beta$, $\gamma$, and $w$ are some fixed constants. Consider a~quantum code~$\cQ = \LP(A, 1+x)$. It~has the~code length $N=\ell (wn+m)$, and since ${\log N \sim \log \ell}$, and ${n = \Theta(\log N)}$ as $N\to\infty$, using~(\ref{eq:lp-simple-dim}) the~dimension of~$\cQ$ is equal to ${K = \Theta(n) = \Theta(\log N)}$. 
Moreover, Proposition~\ref{pr:main} implies that $d(\cQ)\ge \gamma \ell$.  

Let us show the~upper bound $d(\cQ) \le \ell = \Omega(N/\log N)$.
Indeed, from (\ref{eq:LP-dual-dim}) it follows that $\dX(\cQ) = \dZ(\cQ')$, where $\cQ' = \LP(A^\T, b)$. If we apply  Lemma~\ref{lm:LP-simple-cws} to the~code $\cQ'$, then we obtain that $[\zm, \ones{\ell}v']$ is a~non-degenerate codeword from~$\CZ(\cQ')$ if $v' \not\in \im B^\T$, where $B = A(1)\!\in\!\Mat_{m\times wn}(\F_2)$. Since $m < wn$,  the column space of $B^\T$ does not contain some standard basis vector $e_i = (0,\dots,1,\dots,0)\in\F_2^{wn}$. Therefore $c=[\zm, \ones{\ell}e_i]$ is a non-degenerate codeword from $\CZ(\cQ')$, and $\dX(\cQ)=\dZ(\cQ') \le \abs{c} = \ell$. Hence we finally obtain that $d(\cQ)=\Theta(\ell) = \Theta(N/\log N)$, and the~proof is complete.
\end{proof}

\section{Conclusion}

We have demonstrated that the family of lifted product codes from our previous work~\cite{Panteleev&Kalachev:2019} contains QLDPC codes of dimension $\Theta(\log N)$ and \mbox{distance} $\Theta(N/\log N)$ as the~code length~$N\to\infty$. Moreover, we have shown a~way how to increase their dimension and obtained QLDPC codes of dimension~$\Omega(N^\alpha \log N)$ and \mbox{distance}~$\Omega(N^{1-\alpha/2}/\log N)$, where $0 \le \alpha < 1$. To the best of our knowledge, the~parameters of the~obtained codes are better than for previous QLDPC constructions. Let us note that the~obtained distance $\Theta(N/\log N)$ is also asymptotically  larger than the currently best-known distance $N^{1-\eps}$ for sparse subsystem codes~\cite{Bacon:2017}, where $\eps=O(1/\sqrt{\log N})$. 

We should emphasize that Proposition~\ref{pr:main} allows us, in principle, to obtain (with some extra work) QLDPC codes, where $\dZ = \Theta(N)$ and $\dX = \Theta(N/\log N)$. Though we think that we know how to achieve this, we do not present a~formal proof here in order to make our construction of the~Tanner code $\cT(G,\cC_0)$ as simple as possible.

As a~simple byproduct of the proof of Theorem~\ref{th:main1} we have also constructed a~family of classical QC~LDPC codes of any design rate and distance $\Theta(N)$ with, in some sense, optimal circulant size $\Omega(N/\log N)$.

Besides, we have further generalized our construction from~\cite{Panteleev&Kalachev:2019}, and obtained a~large class of CSS codes called lifted product codes. 
The proposed codes are quite general and contain many of the best-known quantum LDPC codes such as the hypergraph product codes~\cite{Tillich&Zemor:2009}, the bicycle codes~\cite{Mackay:2004}, the~Haah's cubic codes~\cite{Haah:2011}. 
Some of the codes from this class (e.g., the~codes $\LP(A,A^*)$ from Example~\ref{ex:LP}) have the~dimension $K=\Theta(N)$ as $N\to\infty$, but the~only upper bound on the~minimum distance we have now is $O(N/\log N)$. However, we should warn the~reader that the~methods we used in~the~proof of~Proposition~\ref{pr:main} cannot be directly applied here. Therefore it is an~interesting open problem whether some of these codes have distance that matches this upper bound. It is also interesting weather this distance can be made linear using other matrix rings $R\subseteq \Mat_{\ell\times\ell}(\F_2)$, e.g. non-commutative ones. 

We have also extended the~lifted product operation from codes to chain complexes. It naturally generalizes the~standard tensor product of two complexes, widely used in the~context of quantum codes. Though we do not discuss it here, this operation can be used to obtain quantum codes out of quantum and classical codes. For example, any quasi-cyclic (classical or quantum) code~$\cC$ of lift size $\ell$ can be combined with some other quasi-cyclic code (classical or quantum) $\cC'$ of the~same lift size in order to produce a~quantum code from the~lifted product $\cC\otimes_{\ell} \cC'$.  We think that obtaining such codes, and estimating their parameters can also be an~interesting line of future research.


\section*{Acknowledgment}
This work was supported by the Ministry of Science and Higher Education of the Russian Federation (Grant №~075-15-2020-801).

\ifCLASSOPTIONcaptionsoff
  \newpage
\fi


\bibliographystyle{IEEEtran}
\bibliography{coding,quantum}

%




\begin{IEEEbiographynophoto}{Pavel Panteleev} 
received the M.S. degree in computer science from the Moscow State Aviation Technological University in 2001 and the Ph.D. (Candidate of Science) degree in mathematics from the Moscow State University in 2006. Since 2003, he has been a~researcher at the Faculty of Mechanics and Mathematics of the Moscow State University. From 2004 to 2014 he had also been working in LSI Corporation, currently a~part of Broadcom Inc., on hardware architectures for data storage and transmission systems. Starting from 2014, he has been engaged as a~consultant in several cooperation projects with Huawei Technologies, focusing on error-correcting codes and, more recently, on quantum computing. His research interests include applications of abstract algebra and combinatorics to computer science and information theory.  
\end{IEEEbiographynophoto}

\begin{IEEEbiographynophoto}{Gleb Kalachev}
received the M.S. degree in 2013  from the Moscow State University, where he also received the Ph.D. (Candidate of Science) degree in 2018, both in mathematics.
Since 2014, he has been a~researcher at the Faculty of Mechanics and Mathematics of the Moscow State University. Starting from 2014, he also participated as a~consultant in several cooperation projects with Huawei Technologies related to data storage, error-correcting codes, and quantum computing. His research interests include  complexity theory, information theory, and cellular automata.
\end{IEEEbiographynophoto}

\vfill

\appendices

\section{Ring of circulants}\label{sc:circulants}

An~$\ell\times\ell$ circulant matrix $\mat{A}$ over $\F_2$ takes the form
\[
\mat{A} =
\begin{pmatrix}
a_{0} & a_{l-1} & \ldots & a_{1}\\
a_{1} & a_{0} & \ldots & a_{2}\\
\hdotsfor{4}\\
a_{\ell-1} & a_{\ell-2} & \ldots & a_{0}\\
\end{pmatrix},
\]
where $a_0,\ldots,a_{\ell-1}\in \F_2$. It is readily seen that the matrix $\mat{A}$ can be represented in the form
\[\mat{A} = a_0\mat{I} + a_1\mat{P} + \dots a_{\ell-1}\mat{P}^{\ell-1},\]
where $\mat{I}$ is the $\ell\times\ell$ identity matrix and
\[
\mat{P} =
\begin{pmatrix}
0 & 0 & \ldots & 1\\
1 & 0 & \ldots & 0\\
0 & 1 & \ldots & 0\\
\hdotsfor{4}\\
0 & 0 & \ldots & 0\\
\end{pmatrix}
\]
is the $\ell\times\ell$ permutation matrix representing the \emph{right} cyclic shift by \emph{one} position. Since $\mat{P}^\ell = \mat{I}$, we see that the ring of all $\ell\times\ell$ circulant matrices over $\F_2$ is isomorphic to the ring $R_\ell=\F_2[x]/(x^\ell-1)$ of polynomials over $\F_2$ modulo the polynomial $x^\ell - 1$. Hence the circulant matrix $\mat{A}$ can be uniquely represented by the polynomial:
\[a = a_0 + a_1 x + \dots + a_{\ell-1} x^{\ell-1}.\]

The algebraic structure of the ring $R_\ell$ is well studied in the~coding literature (see, e.g.,~\cite{Ling2006}). We briefly review it here.

First, let us consider the special case when $\ell$ is odd. In this case  the polynomial $x^\ell - 1$ factors into a product of  irreducible polynomials over $\F_2$
\begin{equation}\label{eq:factors-co-prime}
x^\ell - 1  = f_1(x)\cdots f_s(x).
\end{equation}
This is true, since  
\[
\gcd\left((x^\ell - 1)', x^\ell - 1\right) = \gcd\left(\ell x^{\ell - 1}, x^\ell - 1\right) = 1,
\]
and the polynomial $x^\ell - 1$ is square-free. 

In the general case we have $\ell = 2^e\ell'$, where $\ell'$ is odd. Hence it follows that 
\[x^\ell - 1 = x^{2^e\ell'} - 1 = (x^{\ell'} - 1)^{2^e}.\]
Moreover, since $\ell'$ is odd, we can apply the factorization (\ref{eq:factors-co-prime}) to the polynomial $x^{\ell'} - 1$ and obtain that 
\begin{equation}\label{eq:factors}
x^\ell - 1  = \bigl(f_1(x)\bigr)^{2^e}\cdots \bigl(f_s(x)\bigr)^{2^e}.
\end{equation}

Since the polynomials $(f_1(x)\bigr)^{2^e},\dots,(f_s(x)\bigr)^{2^e}$ are pairwise coprime, from the Chinese remainder theorem it follows that the ring $R_\ell$ is isomorphic to the direct product  
\begin{equation}\label{eq:isom-co-prime}
R^{(1)}\times \dots \times R^{(s)}
\end{equation}
of the rings $R^{(i)} = \F_2[x]/\bigl(f_i(x)\bigr)^{2^e}$, $i\in [s]$. 

When $\ell$ is odd we have $e=0$ and the rings $R^{(1)},\dots,R^{(s)}$ are in fact fields, since the polynomials $f_1(x),\dots,f_s(x)$ are irreducible over $\F_2$.

\section{Decomposition of quasi-abelian LP codes}\label{sc:LP-decomp}

Consider a~commutative group algebra $R=\F_2G$, where $\abs{G}=\ell$. Suppose that $R$ is a~direct product of rings:
\begin{equation}\label{eq:ring-decomp}
    R \cong R^{(1)} \times \dots \times R^{(s)}
\end{equation}
with the corresponding morphisms $\phi_i\colon R \to R^{(i)}$, $i\in [s]$. Let~us note that since $R$ is a~finite dimensional algebra over~$\F_2$, the~same holds for the~rings $R^{(1)},\dots,R^{(s)}$, and we have $\sum_{i\in[s]}\dim R^{(i)} = \ell$. This direct product structure implies that any~matrix $M$ over~$R$ can be uniquely represented by the~collection of matrices $\bigl(\phi_i(M)\bigr)_{i\in [s]}$, where $\phi_i(M)$ is the~matrix over $R^{(i)}$ obtained by the~action of~$\phi_i$  on each element of $M$. 

Using this idea, we can represent any code $\LP(A,B)$ constructed from matrices $A$ and $B$ over~$R$ by the~collection of $s$ codes $\bigl(\LP(A_i,B_i)\bigr)_{i\in [s]}$, where  $A_i = \phi_i(A)$, $B_i = \phi_i(B)$ are matrices over the~ring~$R^{(i)}$.
Since the direct product gives us a~one-to-one correspondence between the elements $a\in R$ and the tuples $\bigl(\phi_i(a)\bigr)_{i\in[s]}$, we also get a~one-to-one correspondence between the~codewords $c$ from $\LP(A,B)$ and the tuples of codewords $\bigl(\phi_i(c)\bigr)_{i\in[s]}$ from the~collection $\bigl(\LP(A_i,B_i)\bigr)_{i\in [s]}$. Moreover, it is not hard to check that this one-to-one correspondence also respects the degeneracy of the codewords, i.e., $c$ is degenerate iff all the~codewords $\bigl(\phi_i(c)\bigr)_{i\in[s]}$ are degenerate. This yields that:
\begin{equation*}
    \dim \LP(A,B) = \sum_{i\in [s]} \dim \LP(A_i, B_i).
\end{equation*}

In addition, if in decomposition~(\ref{eq:ring-decomp}) every ring $R^{(i)}$ is a~finite field $\F_{q_i}$, then every $\LP(A_i,B_i)$ can be uniquely represented (see Example~\ref{ex:HP-non-bin}) by a non-binary HP code defined by the~matrices $A_i$ and $B_i$ over $\F_{q_i}$, where $q_i = 2^{r_i}$, $i\in [s]$. Hence using~(\ref{eq:LP-non-bin-dim}) we obtain the following formula:  
\begin{equation}\label{eq:lp-prod-dim}
    \dim \LP(A,B) = \sum_{i\in [s]} r_i\dim \HP(A_i, B_i).
\end{equation}
At the same time, $\dim \HP(A_i, B_i)$ for each $i\in[s]$ can be found by formula~(\ref{eq:HP-dim}).

If the~lift size $\ell=\abs{G}$ is odd, then from Maschke's theorem it follows that the algebra $\F_2G$ is semisimple and hence is isomorphic~\cite[Theorem~2.4.1]{Drozd:1994} to the direct product of finite fields~${\F_{q_1}\times \dots \times \F_{q_s}}$. Hence, for matrices $A$ and $B$ over $\F_2G$, the~dimension of the~quasi-abelian code $\LP(A,B)$ is given by formula~(\ref{eq:lp-prod-dim}).

Let us also note that if $\ell$ is odd, then Lemma~\ref{lp-simple-dim} is just a special case of formula~(\ref{eq:lp-prod-dim}). Indeed, it is well known (see~Appendix~\ref{sc:circulants}) that if $\ell$ is odd, then the ring $R_\ell$ is isomorphic to the direct product of finite fields:
\[
R \cong \F_{q_1} \times \dots \times \F_{q_s},
\]
where each field $\F_{q_i}$, $q_i = 2^{r_i}$, corresponds to an~irreducible factor $f_i \in \F_2[x]$, $\deg f_i = r_i$, of the~polynomial $x^\ell - 1$. We have the following homomorphisms ${\phi_i\colon R_\ell\to \F_{q_i}}$ defined by $\phi_i\colon u\mapsto u(\beta_i)$, where $\beta_i$ is a~root of $f_i$ in the~field~$\F_{q_i}$, $i\in [s]$.  Without loss of generality we can assume that $f_1 = b$, and $\beta_1 = \beta$. Hence by (\ref{eq:lp-prod-dim}) we have\footnote{Let us note that in this sum we consider non-binary HP codes.}:
\begin{equation*}
    \dim \LP(A,b) = \sum_{i\in [s]} r_i\dim \HP(A(\beta_i), b(\beta_i)).
\end{equation*}
Since $b(\beta_i)\ne 0$ whenever $i\ne 1$, it is easy to see that $\dim \HP(A(\beta_i), b(\beta_i)) = 0$ for $i\ne 1$. At the same time, 
it~is clear that:
\begin{align*}
\dim \HP(A(\beta), 0) &= n + m - \rk [A(\beta), 0] - \rk [0, A^\T(\beta)] \\ 
                      &= \dim \cC(A(\beta)) +  \dim \cC(A^\T(\beta)),    
\end{align*}
where $n, m$ are the number of columns and rows in $A$, respectively. Thus 
\[{\LP(A, b) = r_1\!\rbr{\dim \cC(A(\beta)) +  \dim \cC(A^\T(\beta))}},\] 
where $r_1 = \deg b$, and we obtain the formula from Lemma~\ref{lp-simple-dim}.

\section{List of symbols and abbreviations}\label{sc:symbols}

\begin{tabular}{cp{0.45\textwidth}}
  $[n]$ & set $\{1,2,\dots,n\}$\\
  $\abs{u}$ &  Hamming weight of the vector~$u$\\
  $x\cap y$ & intersection of binary vectors\\
  $\F_q$ & finite field with $q$ elements\\ 
  $\mathbf{S}_n$ & set of all permutations on $[n]$\\
  $\mathbf{C}_n$ & cyclic group of order~$n$\\
  $\ell$ & lift size or circulant size\\
  $R_\ell$ & quotient ring $\F_2[x]/(x^\ell - 1)\cong \F_2\mathbf{C}_\ell$\\
  $\ones{t}$ & all one polynomial $\sum_{i=0}^{t-1} x^i$\\
  $\F G$ & group algebra over~$\F$ for the~group~$G$ \\
  $\bar{a}$ & antipode $\bar{a} = \sum_{g\in G}\alpha_g g^{-1}$ for $a\in\F G$\\
  $\cC^\perp$ & dual code for $\cC$\\
  $\cC(H)$ & code with the~parity-check matrix~$H$\\
  $\ker A$ & kernel of the linear map $v\mapsto Av$\\ 
  $\im A$ & image of the linear map $v\mapsto Av$\\ 
  $\Mat_{m\times n}(R)$ & set of all~$m\times n$ matrices over $R$\\
  $\Mat_{n}(R)$ & set of all~$n\times n$ matrices over $R$\\
  $\B(A)$ & block matrix for $A\!\in\! \Mat_{m\times n}(R)$\\
  $A^\T$ & standard transpose for $A$\\
  $A^*$ & conjugate transpose for $A$\\
  $\cQ^*$ & quantum code $\cQ$ with swapped $\CZ,\CX$\\
  $\cQ\sim \cQ'$ & permutation equivalent codes\\
  $\HP(A,B)$ & hypergraph product code\\
  $\LP(A,B)$ & lifted product code \\
  QLDPC & quantum low-density parity-check\\
  LDPC & classical low-density parity-check\\
  QC & quasi-cyclic\\
  w.h.p & with high probability
\end{tabular}

\end{document}